\documentclass[letterpaper,journal]{IEEEtran}
\usepackage{amsmath,amsfonts,amsthm}
\usepackage{amssymb}
\usepackage{algorithmic}
\usepackage{algorithm}
\usepackage{array}
\usepackage[caption=false,font=normalsize,labelfont=sf,textfont=sf]{subfig}
\usepackage{textcomp}
\usepackage{stfloats}
\usepackage{url}
\usepackage{verbatim}
\usepackage{graphicx}
\usepackage{cite}
\usepackage{xcolor}
\usepackage{booktabs} 
\usepackage{multirow}
\usepackage{mathtools}
\newtheorem{theorem}{Theorem}
\newtheorem{lemma}{Lemma}
\begin{document}

\title{Joint Beamforming and RBG Scheduling in Multi-Cell
MU-MIMO Networks: A Bounded Deep Unfolding Approach}
\author{Jiansheng~Li,
Shuqi~Chai,
Fan~Xu,
Tian Ding,
Kaiming~Shen,
Guangxu~Zhu,
Junting~Chen

\thanks{\textit{Corresponding author: Shuqi Chai.}}
\thanks{Jiansheng Li, Kaiming Shen, and Junting Chen are with the School of Science and Engineering, The Chinese University of Hong Kong, Shenzhen, 518172, China 
(e-mail: jianshengli@link.cuhk.edu.cn; shenkaiming@cuhk.edu.cn; juntingchen@cuhk.edu.cn).}%
\thanks{Shuqi Chai, Tian Ding and Guangxu Zhu are with the Shenzhen Research Institute of Big Data, Shenzhen 518172, China 
(e-mail: schai@sribd.cn; dingtian@sribd.cn; gxzhu@sribd.cn).}%
\thanks{Fan Xu is with the School of Electronic Information Engineering, Tongji University, Shanghai 201804, China 
(e-mail: xxiaof999@tongji.edu.cn).}%
}

\maketitle

\begin{abstract}
This paper investigates the joint resource block group (RBG) scheduling and beamforming optimization problem for weighted sum-rate (WSR) maximization in multi-cell multiuser multiple-input multiple-output (MU-MIMO) downlink networks. While the Fast Fractional Programming (FastFP) framework provides a reliable model-driven solution, it suffers from conservative continuous beamforming updates and prohibitive computational overhead during the discrete RBG matching phase. To address these bottlenecks, we propose a joint deep unfolding framework comprising two core modules: P-Net and K-Net. Specifically, P-Net learns an adaptive relaxation factor along the FastFP direction, strictly bounded within an ascent-preserving interval to accelerate convergence while retaining stationary-point guarantees. Meanwhile, K-Net learns a long-horizon priority policy to guide a low-complexity greedy assignment, maintaining high assignment quality while bypassing the computationally expensive Hungarian matching. Both networks leverage analytical algorithmic priors and utilize recurrent parameter sharing, enabling flexible inference beyond the training horizon. Extensive simulations demonstrate that the proposed joint framework achieves higher WSR and faster execution times than conventional model-driven baselines, while generalizing robustly across unseen network scales and channel conditions without retraining.
\end{abstract}

\begin{IEEEkeywords}
Deep unfolding, fractional programming (FP), joint beamforming and scheduling, multi-cell MU-MIMO, learning-to-optimize.
\end{IEEEkeywords}

\section{Introduction}
Future wireless systems increasingly rely on dense spectrum reuse \cite{itur_m2160_2023,ericsson_mobility_report_2024} and multi-cell MU-MIMO transmission \cite{tataria2021sixg,letaief2019roadmap} to meet growing capacity and connectivity demands. However, aggressive co-channel reuse intensifies inter-cell and inter-user interference, making interference-aware resource coordination essential \cite{gesbert2010multicell,bjornson2013optimal}. In OFDMA downlink networks, discrete resource block group (RBG) scheduling determines the interference topology \cite{huang2009downlink}, whereas continuous beamformers control the desired signal power and interference leakage. Jointly optimizing these two variables therefore leads to a challenging nonlinear mixed-integer problem, in which each scheduling decision changes the interference environment over which the beamformers are optimized \cite{liu2014complexity,song2008weighted,yu2011multicell, hassan2014downlink,khanafer2012mimo}.

Classical model-driven methods typically address this coupling through alternating optimization. For a fixed RBG assignment, continuous beamforming can be optimized using WMMSE or fractional programming (FP) \cite{christensen2008wmmse,shi2011wmmse,shen2018fp1}. Although recent WMMSE variants reduce the antenna-dimensional computational burden \cite{zhao2023rethinking}, matrix-related operations and power-constraint handling can remain costly in large MIMO systems. FastFP provides an inversion-free alternative by constructing a quadratic surrogate from a conservative curvature bound, thereby ensuring monotonic ascent under the minorization--maximization principle \cite{sun2017majorization,shen2024accelerating}. For discrete scheduling, the transformed FP objective yields analytical edge utilities \cite{shen2018fp2}, allowing RBG allocation to be solved through weighted Hungarian matching \cite{khan2020fp_hungarian}. Nevertheless, this model-driven pipeline exhibits two complementary limitations: the continuous FastFP step can be overly conservative because it is governed by a worst-case curvature bound, whereas the discrete step repeatedly invokes expensive exact matching and myopically maximizes only the current surrogate, without accounting for its effect on subsequent interference and beamforming trajectories.

To address the above limitations, learning-based optimization has emerged as a promising approach for improving the
performance--complexity tradeoff of wireless resource allocation. Existing methods can be broadly organized along the two variable types considered in this work: learning-assisted continuous beamforming optimization and learning-based discrete resource allocation.

For continuous beamforming, early learning-based methods approximate the solution mapping of iterative optimizers or directly predict beamformers from channel states \cite{sun2018learning,lu2020learning}. Although such black-box
mappings can reduce online latency, they do not explicitly preserve the update structure, feasibility mechanism, or convergence behavior of model-driven solvers. Deep unfolding addresses this limitation by embedding a finite number of analytical iterations into a trainable architecture \cite{monga2021algorithm}. Existing unfolded beamforming methods are primarily based on WMMSE-type iterations: IAIDNN learns approximations to expensive matrix inverse operations \cite{hu2021iaidnn}, matrix-inverse-free WMMSE reformulates the updates to remove matrix inversion and related numerical procedures \cite{pellaco2022matrixfree}, and graph-based unfolding exploits interference topology to improve scalability \cite{chowdhury2024uwmmse}. Beyond WMMSE-based designs, DeepFP unfolds FastFP and demonstrates that its iterations can be improved through data-driven parameter tuning \cite{zhu2026deepfp}. Nevertheless, it remains unclear how to accelerate the conservative FastFP trajectory while explicitly preserving its analytical update direction and convergence guarantee. This motivates a bounded learning mechanism that introduces trainable flexibility only within a theoretically safe update region.

For discrete resource allocation, learning-based methods commonly adopt direct policy prediction or reinforcement learning \cite{liang2019deep,alwarafy2021drlsurvey}. However, directly predicting complete assignments faces combinatorial action spaces, difficult constraint satisfaction, and limited scalability. Structured learning-to-optimize methods instead embed trainable decisions into conventional optimization procedures \cite{shen2020lorm}, while topology-aware and permutation-equivariant architectures improve generalization across wireless network configurations \cite{shen2021gnn_rrm,wang2022decentralized_gnn}. Spatial deep learning has also been developed specifically for wireless scheduling by exploiting local interference relations \cite{cui2019spatial}. However, these approaches do not explicitly leverage the analytical FP edge utilities inherent to our matching subproblem. Since such utilities already quantify the local quality of each slot--RBG assignment, relearning the entire allocation mapping is unnecessary. A more structured alternative is to preserve these analytical weights and learn only the order in which competing logical slots select the available RBGs.

Motivated by these two gaps, we propose a joint deep-unfolding framework that integrates P-Net and K-Net into the FP-based alternating solver. Both modules follow a bounded learning principle: P-Net learns a relaxation factor constrained to an ascent-preserving interval along the analytical FastFP direction, whereas K-Net learns only the priority order used by an analytical greedy decoder. Consequently, the learned components enhance the continuous and discrete updates without replacing the reliable model-driven optimization structure \cite{liu2019alista,bertocchi2020irestnet,mukherjee2023guarantees}.


The main contributions of this paper are summarized as follows:
\begin{itemize}
    \item \textbf{Deep Unfolding Framework via Bounded Learning.}
    We formulate joint RBG scheduling and beamforming in multi-cell MU-MIMO networks as a mixed-integer WSR maximization problem. To address the bottlenecks of conservative continuous updates and myopic discrete matching in traditional alternating algorithms, we propose a joint deep-unfolding framework comprising P-Net and K-Net. Following a bounded learning principle, this architecture restricts the neural networks to structured, analytically controlled decision spaces, ensuring that learning enhances rather than replaces reliable model-driven optimization structures.

    \item \textbf{Accelerated Beamforming with Convergence Guarantees via P-Net.}
    For continuous beamforming optimization, we develop P-Net to learn an adaptive relaxation factor along the analytical FastFP direction and strictly bound it within a safe interval. Furthermore, we introduce a low-complexity surrogate safety-check and restart mechanism to handle Nesterov momentum. This bounded unfolding design allows P-Net to significantly accelerate the FastFP trajectory while retaining monotonically nondecreasing and stationary-point convergence.

    \item \textbf{Low-Complexity Scheduling via Priority-Based K-Net.}
    For the discrete matching component, we propose K-Net, which preserves the analytical FP edge utilities and learns a priority order for sequential greedy RBG decoding. Trained with terminal WSR feedback, K-Net accounts for the long-term impact of current scheduling decisions on subsequent trajectories. This structured approach avoids direct assignment prediction, inherently maintains allocation feasibility, and reduces the scheduling complexity from cubic to quadratic in the number of RBGs.

    \item \textbf{Superior Tradeoff and Strong Generalization.}
    Extensive simulations demonstrate that the joint P-Net~+~K-Net deep-unfolding framework achieves a superior WSR--runtime tradeoff over conventional model-driven baselines. The learned modules also generalize across unseen network conditions without retraining.
\end{itemize}

The remainder of this paper is organized as follows. Section~\ref{sec:system_model} presents the multi-cell MU-MIMO system model and formulates the joint RBG scheduling and beamforming problem. Section~\ref{sec:fp} reviews the FP-based beamforming update and Hungarian-based RBG allocation. Section~\ref{sec:unfolding} introduces the proposed bounded deep-unfolding framework, including P-Net and K-Net. Section~\ref{sec:exp_results} reports the numerical results, and Section~\ref{sec:conclusion} concludes the paper.

Throughout this paper, boldface uppercase and lowercase letters denote matrices and vectors, respectively. The space of $M \times N$ complex matrices is denoted by $\mathbb{C}^{M \times N}$. The operators $(\cdot)^T$ and $(\cdot)^H$ stand for the transpose and conjugate transpose of a matrix, respectively. The trace and determinant of a matrix are represented by $\operatorname{Tr}(\cdot)$ and $\det(\cdot)$, respectively. $\mathbf{I}_N$ denotes an $N \times N$ identity matrix. $\Re\{\cdot\}$ extracts the real part of a complex variable. $\mathbb{E}[\cdot]$ denotes the statistical expectation. $\|\cdot\|_F$ represents the Frobenius norm of a matrix. $\mathcal{P}_{\mathcal{W}}(\cdot)$ denotes the projection operator onto a given set $\mathcal{W}$. For Hermitian matrices $\mathbf{A}$ and $\mathbf{B}$, $\mathbf{A} \preceq \mathbf{B}$ indicates that $\mathbf{B} - \mathbf{A}$ is positive semi-definite, and $\lambda_{\max}(\mathbf{A})$ denotes the maximum eigenvalue of $\mathbf{A}$. The cardinality of a set $\mathcal{A}$ is denoted by $|\mathcal{A}|$, and $[N]$ defines the set of integers $\{0, 1, \dots, N-1\}$.

\section{System Model and Problem Formulation}
\label{sec:system_model}
\subsection{System Model}
We consider a multi-cell MU-MIMO downlink with BS set $\mathcal A$, user set $\mathcal U$, and shared RBG set $\mathcal K$, whose cardinalities are $B$, $U$, and $K$, respectively.  All BSs operate on a co-channel deployment basis, sharing the entire set of RBGs, which inherently induces inter-cell interference. Each BS and user are equipped with $M_t$ transmit and $M_r$ receive antennas, respectively. The serving BS transmits $d_u$ streams to user $u$, with data symbol $\mathbf s_u\in\mathbb C^{d_u}$ satisfying $\mathbb E[\mathbf s_u\mathbf s_u^H]=\mathbf I_{d_u}$.
\begin{figure*}[b!]
\normalsize
\hrulefill
\begin{equation}
\label{eq:f_P3_full}
\begin{aligned} 
f_{\text{P3}}(\mathbf{V},\mathbf{\Gamma},\mathbf{Y}) = \sum_{u,i} \Big[  \rho_u \log \det(\mathbf{I}_{d_u} + \mathbf{\Gamma}_{u,i}) - \rho_u \text{Tr}(\mathbf{\Gamma}_{u,i}) + \text{Tr}\left( 2\Re\{\mathbf{V}_{u,i}^H \mathbf{\Lambda}_{u,i}\} \right) 
& - \rho_u \text{Tr} \left( \mathbf{Y}_{u,i}^H \mathbf{J}_{u,i} \mathbf{Y}_{u,i} (\mathbf{I}_{d_u} + \mathbf{\Gamma}_{u,i}) \right) \Big].
\end{aligned}
\end{equation}
\end{figure*}
\subsection{Signal Model and Problem Formulation}
To streamline the formulation, we define two decision variables as follows:

\begin{itemize}
    \item \textbf{RBG allocation variable}
    $k_{u,i}\in[K]$:
    Each user $u$ has $N$ logical decision slots indexed by $i\in[N]$, with $N=K$ unless otherwise specified. This ensures that each user has at most one allocation decision for each physical RBG. As illustrated in Fig.~\ref{fig:rbg_desp}, $k_{u,i}$ maps the $i$-th logical decision slot of user $u$ to a physical RBG. If $k_{u,i}=\emptyset$, the slot is left unassigned; otherwise, the slot is assigned physical RBG $k_{u,i}$ and becomes active.
    \item \textbf{Transmit beamforming variable} 
    $\mathbf{V}_{b_u,k_{u,i},u}\in\mathbb{C}^{M_t\times d_u}$:
    For an active allocation decision slot $(u,i)$ with $k_{u,i}\neq\emptyset$, this precoding matrix is applied by the serving BS $b_u$ to transmit $d_u$ data streams to user $u$ on physical RBG $k_{u,i}$.
\end{itemize}

\begin{figure}[t]
    \centering
    \includegraphics[width=0.48\textwidth]{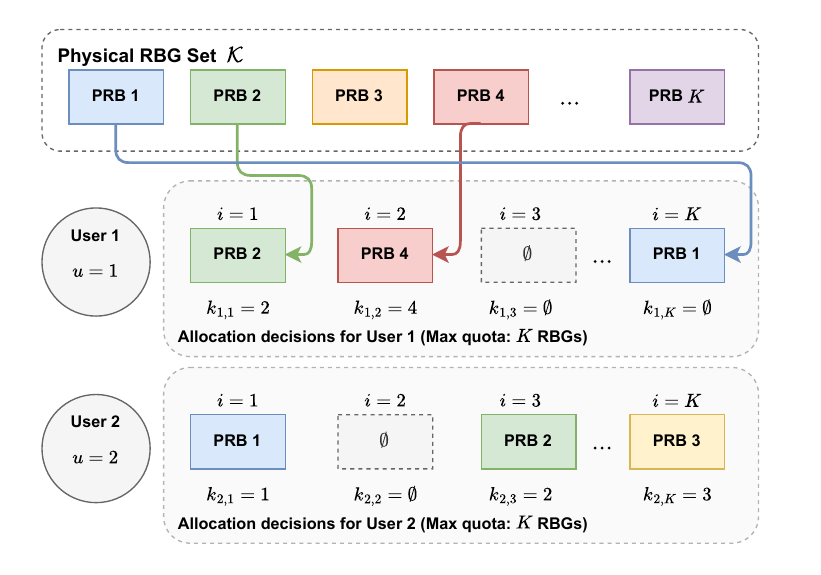}
    \caption{Illustration of the proposed resource allocation variables. Physical RBG indices (represented as blocks) are directly mapped into predefined logical decision slots for each user. Unused allocation decisions are simply marked with the $\emptyset$ state.}
    \label{fig:rbg_desp}
\end{figure}

Define the active slots sharing RBG $k_{u,i}$ as
\begin{equation}
\label{eq:C_ui_def}
\mathcal{C}_{u,i}
\triangleq
\{(j,\ell)\mid k_{j,\ell}=k_{u,i},\; k_{j,\ell}\neq\emptyset\}.
\end{equation}

Since $b_u$ and $k_{u,i}$ are determined by $(u,i)$, we subsequently abbreviate $\mathbf V_{b_u,k_{u,i},u}$ and $\mathbf H_{b_u,k_{u,i},u}$ as $\mathbf V_{u,i}$ and $H_{u,i}$, respectively. The corresponding interference-plus-noise covariance is given by:
\begin{equation}
\label{eq:M_matrix}
\mathbf{M}_{u,i}
=
\sum_{(j,\ell)\in\mathcal{C}_{-u,i}}
\mathbf{H}_{b_j,k_{u,i},u}
\mathbf{V}_{j,\ell}
\mathbf{V}_{j,\ell}^{H}
\mathbf{H}_{b_j,k_{u,i},u}^{H}
+
\sigma^2\mathbf{I}_{M_r}.
\end{equation}
where $\mathcal{C}_{-u,i} \triangleq \mathcal{C}_{u,i}\setminus\{(u,i)\}$.

The achievable rate $R_{u,i}$ for user $u$ on slot $i$ is directly formulated as:
\begin{equation}
R_{u,i} = \log \det \left(\mathbf{I}_{d_u} + \mathbf{V}_{u,i}^H \mathbf{H}_{u,i}^H \mathbf{M}_{u, i}^{-1} \mathbf{H}_{u,i} \mathbf{V}_{u,i}\right).
\end{equation}

Our objective is to maximize the network's Weighted Sum Rate (WSR), where $\rho_u \geq 0$ denotes the priority weight assigned to user $u$.
The joint resource allocation and beamforming optimization problem can be formulated as:

\begin{subequations}
\label{eq:P1}
\begin{align}
\text{P1:}\quad
\max_{\{k_{u,i}\},\,\{\mathbf{V}_{u,i}\}}
\quad & \sum_{u \in [U]} \rho_u \sum_{i \in [N]} R_{u,i} 
\label{eq:obj_wsr}\\
\text{s.t.}\quad
& \|\mathbf{V}_{u,i}\|_F^2 \leq P_{\max},
\quad \forall u \in [U],\,i \in [N],
\label{eq:power_constraint}\\
& k_{u,i} \in [K],
\quad \forall u \in [U],\,i \in [N].
\label{eq:rbg_constraint}
\end{align}
\end{subequations}

Problem \eqref{eq:P1} is a nonconvex mixed-integer program due to the discrete RBG assignments and interference-coupled log-determinant rates. We first derive an FP-based alternating solver in Section \ref{sec:fp}, and then embed learnable continuous and discrete updates into its iterations in Section \ref{sec:unfolding} to simultaneously accelerate convergence and enhance the achievable system performance.

\section{Fractional Programming based Alternating Optimization}
\label{sec:fp}

To solve \eqref{eq:P1}, we adopt an alternating optimization (AO) framework for the continuous beamforming variables $\mathbf{V}_{u,i}$ and discrete RBG allocations $k_{u,i}$. For fixed $k_{ui}$, FP sequentially decouples the desired signals and multi-user interference, transforming the log-determinant objective into tractable quadratic surrogate subproblems \cite{shen2018fp1}. With the FP auxiliary variables fixed, the transformed objective becomes additive over candidate slot--RBG assignments, reducing the allocation step to independent per-user weighted bipartite matching problems with analytically derived edge utilities, which are solved exactly by the Hungarian algorithm \cite{shen2018fp2}.

\subsection{Beamforming Optimization via the FP Framework}

To address the matrix ratio terms embedded within the log-determinant operation, we first apply the Lagrangian dual transform \cite{shen2018fp1}. By introducing an auxiliary variable $\mathbf{\Gamma}_{u,i} \in \mathbb{C}^{d_u \times d_u}$ and fixing $\mathbf{k}$, the original problem is equivalently recast as:
\begin{equation}
\label{eq:P2}
\begin{aligned} 
\text{P2:}\quad\max_{\mathbf{V},\mathbf{\Gamma}} \quad & \sum_{u, i} \Big[ \rho_u \log \det(\mathbf{I}_{d_u} + \mathbf{\Gamma}_{u,i}) - \rho_u \text{Tr}(\mathbf{\Gamma}_{u,i}) \\ 
& + \text{Tr} \left( \rho_u \mathbf{V}_{u,i}^H \mathbf{H}_{u,i}^H \mathbf{J}_{u,i}^{-1} \mathbf{H}_{u,i} \mathbf{V}_{u,i} (\mathbf{I}_{d_u} + \mathbf{\Gamma}_{u,i}) \right) \Big]
\end{aligned} 
\end{equation}
where $\mathbf{J}_{u,i} = \mathbf{H}_{u,i}\mathbf{V}_{u,i}\mathbf{V}_{u,i}^H\mathbf{H}_{u,i}^H + \mathbf{M}_{u,i}$ represents the total covariance matrix. Note that throughout problems P2--P4, the constraints in P1
are retained but omitted for brevity.

Although \eqref{eq:P2} successfully isolates the desired signal terms, the total covariance matrix $\mathbf{J}_{u,i}$ remains coupled within a matrix inverse. To fully decouple these variables, we apply the Quadratic Transform. By introducing decoding auxiliary variables $\mathbf{Y}_{u,i} \in \mathbb{C}^{M_r \times d_u}$, we establish a quadratic lower bound, yielding the following surrogate problem:
\begin{equation}
\label{eq:P3}
\max_{\mathbf{V},\mathbf{\Gamma},\mathbf{Y}} \quad f_{\text{P3}}(\mathbf{V},\mathbf{\Gamma},\mathbf{Y}),
\end{equation}
where the explicit expression of $f_{\text{P3}}$ is given by \eqref{eq:f_P3_full} at \textbf{the bottom of this page}, and 
\begin{equation}
\label{eq:Lambda_fp}
    \mathbf{\Lambda}_{u,i} \triangleq \rho_u \mathbf{H}_{u,i}^H \mathbf{Y}_{u,i} (\mathbf{I}_{d_u} + \mathbf{\Gamma}_{u,i}).
\end{equation}

We iteratively update $\mathbf{\Gamma}_{u,i}$, $\mathbf{Y}_{u,i}$, and $\mathbf{V}_{u,i}$ in a block coordinate ascent manner. Based on the first-order optimality conditions, the optimal closed-form updates for the auxiliary variables, when fixing the other variables, are respectively obtained as:
\begin{equation}
\label{eq:gamma_update}
\mathbf{\Gamma}_{u,i} = \mathbf{V}_{u,i}^H \mathbf{H}_{u,i}^H \mathbf{M}_{u, i}^{-1} \mathbf{H}_{u,i} \mathbf{V}_{u,i},
\end{equation}
\begin{equation}
\label{eq:y_update}
\mathbf{Y}_{u,i} = \mathbf{J}_{u,i}^{-1} \mathbf{H}_{u,i} \mathbf{V}_{u,i}.
\end{equation}
 
When optimizing the transmit beamformers $\mathbf{V}_{u,i}$ with the $\mathbf{\Gamma}_{u,i}$ and $\mathbf{Y}_{u,i}$ fixed, the surrogate function \eqref{eq:P3} becomes concave with respect to $\mathbf{V}_{u,i}$. Consequently, the optimal closed-form update is given by:
\begin{equation}
\label{eq:fp_update}
\mathbf{V}_{u,i}
=
\left(
\mathbf{L}_{u,i}
+
\mu_{u,i}\mathbf{I}_{M_t}
\right)^{-1}
\boldsymbol{\Lambda}_{u,i},
\end{equation}
where $\mu_{u,i}\geq 0$ is the Lagrange multiplier associated with the power constraint. To express the curvature matrix $\mathbf{L}_{u,i}$ compactly, we define the weighted decoding matrix as:
\begin{equation}
\label{eq:Psi_def}
\boldsymbol{\Psi}_{j,\ell}
\triangleq
\mathbf{Y}_{j,\ell}
\left(
\mathbf{I}_{d_j}
+
\boldsymbol{\Gamma}_{j,\ell}
\right)
\mathbf{Y}_{j,\ell}^{H}.
\end{equation}
Thus, the curvature matrix $\mathbf{L}_{u,i}$ is formulated as:
\begin{equation}
\label{eq:L_matrix_fp}
\mathbf{L}_{u,i}
=
\sum_{(j,\ell)\in\mathcal{C}_{u,i}}
\rho_j
\mathbf{H}_{b_u,k_{ui},j}^{H}
\boldsymbol{\Psi}_{j,\ell}
\mathbf{H}_{b_u,k_{ui},j}.
\end{equation}

\subsection{FastFP and Nesterov Acceleration}
\begin{figure*}[b!]
\normalsize
\hrulefill
\begin{equation}
\label{eq:f_P4_full}
\begin{aligned} 
f_{\text{P4}}(\mathbf{V},\mathbf{\Gamma},\mathbf{Y},\mathbf{Z}) = \sum_{u,i} \Big[ & \rho_u \log \det(\mathbf{I}_{d_u} + \mathbf{\Gamma}_{u,i}) - \rho_u \text{Tr}(\mathbf{\Gamma}_{u,i}) - \rho_u \sigma^2 \text{Tr}\left(\mathbf{Y}_{u,i}^H \mathbf{Y}_{u,i} (\mathbf{I}_{d_u} + \mathbf{\Gamma}_{u,i})\right) - \lambda_{u,i} \text{Tr}(\mathbf{V}_{u,i}^H \mathbf{V}_{u,i}) \\
& + \text{Tr}\left( 2\Re\left\{ \mathbf{V}_{u,i}^H \mathbf{\Lambda}_{u,i} + \mathbf{V}_{u,i}^H (\lambda_{u,i} \mathbf{I}_{M_t} - \mathbf{L}_{u,i}) \mathbf{Z}_{u,i} \right\} \right) - \text{Tr}\left( \mathbf{Z}_{u,i}^H (\lambda_{u,i} \mathbf{I}_{M_t} - \mathbf{L}_{u,i}) \mathbf{Z}_{u,i} \right) \Big]. 
\end{aligned}
\end{equation}
\end{figure*}

Although the FP beamforming update in \eqref{eq:fp_update} admits a closed-form solution, it requires an $M_t$-dimensional matrix inversion with complexity $\mathcal{O}(M_t^3)$. FastFP avoids this operation by applying the nonhomogeneous bound \cite{sun2017majorization,shen2024accelerating}. With the FP auxiliary variables $\mathbf{\Gamma}$ and $\mathbf{Y}$ fixed at their closed-form updates in \eqref{eq:gamma_update} and \eqref{eq:y_update}, respectively, we choose
\begin{equation}
    \lambda_{u,i}
    \geq
    \lambda_{\max}\!\left(\mathbf{L}_{u,i}\right),
\end{equation}
such that
$\lambda_{u,i}\mathbf{I}_{M_t}-\mathbf{L}_{u,i}\succeq\mathbf{0}$. By introducing the auxiliary variables
$\mathbf{Z}$ and applying the nonhomogeneous bound to the beamformer-dependent quadratic terms, we obtain a tight minorizing surrogate of \eqref{eq:P3}:
\begin{equation}
\label{eq:P4}
\max_{\mathbf{V},\mathbf{\Gamma},\mathbf{Y},\mathbf{Z}} \quad f_{\text{P4}}(\mathbf{V},\mathbf{\Gamma},\mathbf{Y},\mathbf{Z}),
\end{equation}
where the explicit expression of $f_{\text{P4}}$ is given by \eqref{eq:f_P4_full} at \textbf{the bottom of this page}.



At iteration $\tau$, maximizing the surrogate in \eqref{eq:P4} over the feasible power set yields the inversion-free projected update as follows:
\begin{equation}
\label{eq:fastfp_projected_update}
\begin{aligned}
\mathbf{V}_{u,i}^{(\tau)}
=
\mathcal{P}_{\mathcal{W}}
\Bigg[
\mathbf{Z}_{u,i}^{(\tau)}
+
\frac{1}{\lambda_{u,i}}
\Big(
\mathbf{\Lambda}_{u,i}
-
\mathbf{L}_{u,i}\mathbf{Z}_{u,i}^{(\tau)}
\Big)
\Bigg],
\end{aligned}
\end{equation}
where the projection operation $\mathcal{P}_{\mathcal{W}}(\cdot)$ onto the per-beamformer power constraint is simply given by
$\mathcal{P}_{\mathcal{W}}(\mathbf{X})=\min\left\{1,\sqrt{P_{\max}}/\|\mathbf{X}\|_F\right\}\mathbf{X}$.

The specific FastFP variant is determined entirely by the choice of the anchor matrix $\mathbf{Z}_{u,i}^{(\tau)}$. Standard FastFP uses the previous iterate as the anchor, i.e.,
$\mathbf{Z}_{u,i}^{(\tau)}=\mathbf{V}_{u,i}^{(\tau-1)}$,
under which \eqref{eq:fastfp_projected_update} becomes a projected gradient-ascent step. In contrast, Nesterov-FastFP replaces the previous-iterate anchor with a momentum-extrapolated point \cite{shen2024accelerating}: $\mathbf Z_{u,i}^{(\tau)}
=\mathbf U_{u,i}^{(\tau-1)}$, where
\begin{equation}
\label{eq:nesterov_anchor_update}
\begin{aligned}
\mathbf{U}_{u,i}^{(\tau-1)}
={}&
\mathbf{V}_{u,i}^{(\tau-1)}
+
\nu^{(\tau-1)}
\Big(
\mathbf{V}_{u,i}^{(\tau-1)}
-
\mathbf{V}_{u,i}^{(\tau-2)}
\Big),\\
\nu^{(\tau-1)}
={}&
\max\left\{
\frac{\tau-2}{\tau+1},\,0
\right\}.
\end{aligned}
\end{equation}
Therefore, Standard FastFP and Nesterov-FastFP share the same projected update in \eqref{eq:fastfp_projected_update} and differ only in the construction of the anchor matrix.

\subsection{RBG Allocation Optimization via Bipartite Matching}


For fixed FP auxiliary variables, the surrogate in \eqref{eq:P4} is additive over candidate slot--RBG assignments. Thus, the global allocation can be decoupled by evaluating the potential WSR gain of assigning physical RBG $n$ to user $u$'s $i$-th decision slot. We define this gain as the edge weight $\xi_{u,i,n}^{(\tau)}$. To formulate the edge weight explicitly, we first define the set of active allocation decisions from other users currently occupying RBG $n$ as:
\begin{equation}
\mathcal{S}_{-u,n}^{(\tau-1)}
\triangleq
\{(j,\ell)\mid j\neq u,\; k_{j,\ell}^{(\tau-1)}=n\}.
\end{equation}
Consequently, the receiver-side allocation decisions affected by the candidate beamformer are collected in
$\mathcal{R}_{u,i,n}^{(\tau)} \triangleq \{(u,i)\} \cup \mathcal{S}_{-u,n}^{(\tau-1)}.$
The edge weight $\xi_{u,i,n}^{(\tau)}$ is then formulated as:
\begin{equation}
\begin{aligned}
\label{eq:xi_weight}
\xi&_{u,i,n}^{(\tau)} =  \, 2 \Re\left\{\operatorname{Tr}\left(\widehat{\mathbf{V}}_{u,i,n}^{(\tau)H} \mathbf{\Lambda}_{u,i,n}^{(\tau)}\right)\right\} - \operatorname{Tr}\left(\widehat{\mathbf{V}}_{u,i,n}^{(\tau)H} \mathbf{L}_{u,i,n}^{(\tau)} \widehat{\mathbf{V}}_{u,i,n}^{(\tau)} \right) \\
 & - \sum_{(j,\ell)\in\mathcal S_{-u,n}^{(\tau-1)}} \rho_u \operatorname{Tr}\left( \mathbf{V}_{j,\ell}^{(\tau-1)H} \mathbf{H}_{b_j,n,u}^{H} \boldsymbol{\Psi}_{u,i}^{(\tau)} \mathbf{H}_{b_j,n,u} \mathbf{V}_{j,\ell}^{(\tau-1)} \right),
\end{aligned}
\end{equation}
where the optimal candidate beamformer $\widehat{\mathbf{V}}_{u,i,n}^{(\tau)}$ is given by:
\begin{equation}
\label{eq:v_cand}
\widehat{\mathbf{V}}_{u,i,n}^{(\tau)} = \mathcal{P}_{\mathcal{W}}\left( \mathbf{V}_{u,i}^{(\tau-1)} + \frac{1}{\lambda_{u,i,n}^{(\tau)}} \left(\mathbf{\Lambda}_{u,i,n}^{(\tau)} - \mathbf{L}_{u,i,n}^{(\tau)} \mathbf{V}_{u,i}^{(\tau-1)} \right) \right).
\end{equation} 
The constituent matrices $\boldsymbol{\Lambda}_{u,i,n}^{(\tau)}$ and $\mathbf{L}_{u,i,n}^{(\tau)}$ are defined as:
\begin{subequations}
\begin{align}
\boldsymbol{\Lambda}_{u,i,n}^{(\tau)}
  &= \rho_u \,
     \mathbf{H}_{b_u,n,u}^{H}
     \mathbf{Y}_{u,i}^{(\tau)}
     \bigl(\mathbf{I}_{d_u} + \boldsymbol{\Gamma}_{u,i}^{(\tau)}\bigr), \\
\mathbf{L}_{u,i,n}^{(\tau)}
  &= \sum_{(j,\ell) \in \mathcal{R}_{u,i,n}^{(\tau)}}
     \rho_j \,
     \mathbf{H}_{b_u,n,j}^{H}
     \boldsymbol{\Psi}_{j,\ell}^{(\tau)}
     \mathbf{H}_{b_u,n,j}.
\end{align}
\end{subequations}
As shown in \eqref{eq:xi_weight}, this edge weight captures three critical components: the desired signal contribution of the candidate link, the interference leakage to the receiver-side links in $\mathcal{R}_{u,i,n}^{(\tau)}$, and the interference received from existing transmitters on RBG $n$.

With the edge weights analytically established, the combinatorial allocation problem becomes a \textbf{maximum-weight bipartite matching problem}. Since our system allows spatial sharing of physical RBGs, the global allocation decouples into $U$ independent per-user matching subproblems. For each user $u$, a bipartite graph connects its $N$ logical decision slots to the $K$ physical RBGs. The optimal discrete assignment $x_{u,i,n}^* \in \{0,1\}$ is obtained by solving the following integer linear program via the Hungarian algorithm:
\begin{subequations}
\label{eq:bipartite}
\begin{align}
\underset{x_{u,i,n} \in \{0,1\}}{\text{maximize}} \quad & \sum_{i \in [N]} \sum_{n \in [K]} \xi_{u,i,n}^{(\tau)} x_{u,i,n} \\
\text{s.t.} \quad & \sum_{n \in [K]} x_{u,i,n} \leq 1, \quad \forall i \in [N] \\
& \sum_{i \in [N]} x_{u,i,n} \leq 1, \quad \forall n \in [K]
\end{align}
\end{subequations}
Upon obtaining the optimal binary assignment $x_{u,i,n}^*$, the specific RBG index $k_{ui}$ allocated to user $u$'s $i$-th decision slot is recovered by:
\begin{equation}
\label{eq:recoverk_from_x}
k_{ui} = 
\begin{cases} 
n, & \text{if } x_{u,i,n}^* = 1 \text{ and }  \xi^{(\tau)}_{u,i,n}>0,\\ 
\emptyset, & \text{otherwise}.
\end{cases}
\end{equation}

Note that in \eqref{eq:recoverk_from_x}, a candidate assignment is accepted only if it yields a strictly positive WSR gain, i.e., $\xi_{u,i,n}^{(\tau)}>0$. Otherwise, the slot is left unassigned, $k_{u,i}=\emptyset$, to prevent an assignment that would cause severe performance degradation through additional interference. 

The overall alternating procedure for joint RBG allocation and continuous beamforming is summarized in Algorithm~\ref{alg:rbg_matching}. The resulting AO solver has two main bottlenecks. First, the worst-case curvature bound $\lambda_{u,i}$ leads to conservative beamforming steps. Second, per-user Hungarian matching incurs a cubic assignment complexity of $\mathcal O(UK^3)$. These limitations motivate the continuous P-Net accelerator and the discrete K-Net scheduler developed next.

\begin{algorithm}[t!]
\caption{Joint Beamforming Optimization and RBG Allocation via Bipartite Matching}
\label{alg:rbg_matching}
\begin{algorithmic}[1]
\STATE \textbf{input:} Channel matrices $\mathbf{H}$.
\STATE \textbf{initialize:} Beamforming matrices $\mathbf{V}^{(0)}$ satisfying power constraints, and RBG allocation $\mathbf{k}^{(0)}$.
\REPEAT 
    \STATE \textbf{[Beamforming Optimization Step]} \hfill \textit{At iteration $\tau$}
    \STATE Update $\mathbf{\Gamma}^{(\tau)}$ and $\mathbf{Y}^{(\tau)}$ via \eqref{eq:gamma_update} and \eqref{eq:y_update}.

    \STATE Update $\mathbf{V}^{(\tau)}$ via \eqref{eq:fastfp_projected_update}.
    \STATE \textbf{[RBG Allocation Optimization Step]}

    \FOR{each user $u \in [U]$}
        \STATE Compute $\mathbf{\Xi}_u^{(\tau)} = [\xi_{u,i,n}^{(\tau)}]_{N \times K}$ via \eqref{eq:xi_weight}, $\forall i \in [N], n \in [K]$.
        \STATE Solve \eqref{eq:bipartite} via Hungarian algorithm to obtain binary assignment $x_{u,i,n}^*$.
        \STATE Recover $k_{ui}^{(\tau)}$ from $x_{u,i,n}^*$ via \eqref{eq:recoverk_from_x} and update $\mathbf{V}_{u,i}^{(\tau)} \leftarrow \widehat{\mathbf{V}}_{u,i,k_{ui}^{(\tau)}}^{(\tau)} ~,\forall i \in [N]$.
    \ENDFOR
\UNTIL{the objective WSR \eqref{eq:P1} converges}
\STATE \textbf{output:} Final RBG allocation $\mathbf{k}^*$ and beamforming $\mathbf{V}^*$.
\end{algorithmic}
\end{algorithm}

\section{Bounded Deep Unfolding Framework}
\label{sec:unfolding}

We propose a bounded deep unfolding framework comprising P-Net and K-Net. P-Net learns a bounded relaxation factor to accelerate the continuous FastFP updates, whereas K-Net learns the slot priority of a sequential greedy decoder to replace repeated Hungarian matching. This strategically reduces the exact matching complexity from $\mathcal{O}(UK^3)$ to $\mathcal{O}(UK^2)$, allowing the network to directly optimize the long-term allocation trajectory. Both modules retain the analytical FP structure while optimizing the finite-horizon trajectory, as illustrated in Fig.~\ref{fig:total_net_arch}. 

\begin{figure*}[!t]
    \centering
    \includegraphics[width=1\textwidth]{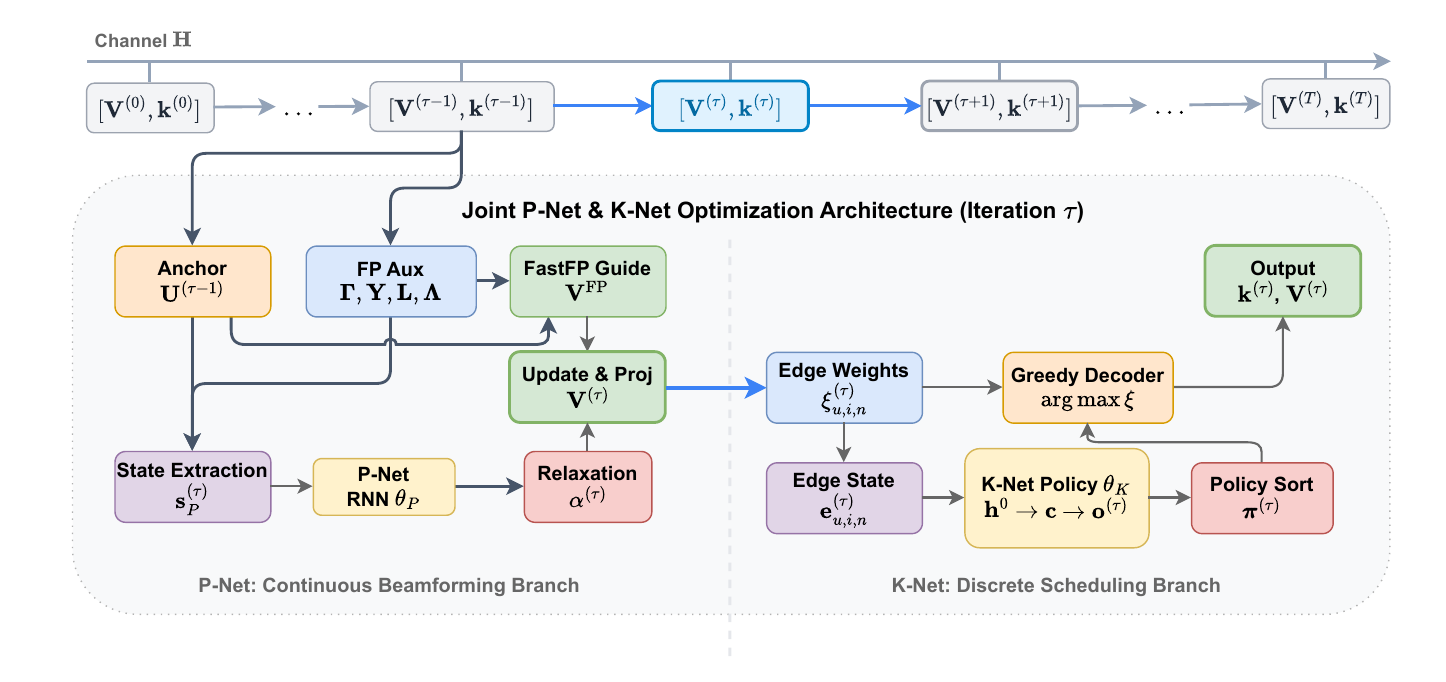}
    \caption{Joint P-Net and K-Net architecture. At iteration $\tau$, P-Net learns a bounded relaxation factor $\alpha^{(\tau)}$ along the FastFP direction, after which K-Net predicts slot priorities from the analytical edge utilities $\xi_{u,i,n}^{(\tau)}$ to guide greedy RBG allocation.}
    \label{fig:total_net_arch}
\end{figure*}

\subsection{P-Net: Bounded Relaxation Controller for Beamforming}

\subsubsection{Learned Update Rule and Its Geometric Principle}
Conventional FastFP is locally surrogate-driven: at each iteration, it moves directly to the constrained maximizer of the current quadratic minorant. Although this update provides a reliable ascent direction, repeatedly taking the same unit step does not exploit information accumulated along the optimization trajectory. We therefore retain the analytical FastFP direction and learn only how far to move along it.

Specifically, we set
$\mathbf Z_{u,i}^{(\tau)}=\mathbf V_{u,i}^{(\tau-1)}$.
With the FP auxiliary variables fixed, the beamformer-dependent part of the FastFP minorant can be written as
\begin{equation}
\label{eq:paraboloid}
\begin{aligned}
Q_{\tau}\!\left(
\mathbf V\mid\mathbf V^{(\tau-1)}
\right)&
={}
\sum_{u,i}
\Big[
2\Re\!\left\{
\operatorname{Tr}\!\left(
\mathbf V_{u,i}^{H}
\mathbf\Xi_{u,i}^{(\tau)}
\right)
\right\}\\
&\hspace{17mm}
-\lambda_{u,i}^{(\tau)}
\|\mathbf V_{u,i}\|_{F}^{2}
\Big]
+C_{\tau},
\end{aligned}
\end{equation}
where
$\mathbf\Xi_{u,i}^{(\tau)}
=
\mathbf\Lambda_{u,i}^{(\tau)}
+
\bigl(
\lambda_{u,i}^{(\tau)}\mathbf I_{M_t}
-\mathbf L_{u,i}^{(\tau)}
\bigr)
\mathbf V_{u,i}^{(\tau-1)}$,
and $C_{\tau}$ collects the terms independent of $\mathbf V$.

Equation~\eqref{eq:paraboloid} is a sum of blockwise isotropic concave paraboloids, each centered at its unconstrained maximizer. We denote this center by
$\widetilde{\mathbf V}_{u,i}^{\mathrm{FP},(\tau)}$, which is precisely the FastFP update in \eqref{eq:fastfp_projected_update} obtained by omitting the projection.

The symmetry of each paraboloid provides the geometric motivation for introducing a relaxation factor. Starting from
$\mathbf V_{u,i}^{(\tau-1)}$, moving toward the paraboloid center increases the surrogate, while moving beyond the center remains ascent-preserving as long as the step does not pass the reflection of the current point about that center. As illustrated in Fig.~\ref{fig:monotonicity_geom}, this corresponds exactly to the interval
$\alpha_{u,i}^{(\tau)}\in(0,2)$.

\begin{figure}[htbp]
    \centering
    \includegraphics[width=0.48\textwidth]{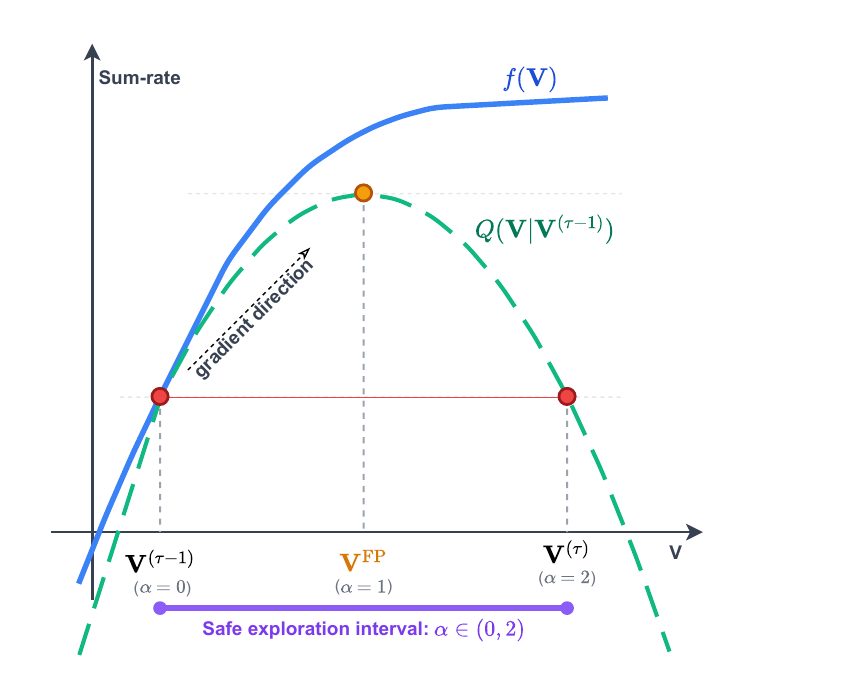}
    \caption{Geometric illustration of the monotonically nondecreasing guarantee. Since the FastFP minorant $Q$ is symmetric around its maximizer $\mathbf V^{\rm FP}$, any $\alpha\in(0,2)$ yields a surrogate value no lower than that at the anchor.}
    \label{fig:monotonicity_geom}
\end{figure}

Motivated by this property, P-Net generates a link-specific relaxation factor and performs the following projected update:
\begin{equation}
\label{eq:update_rule_pnet0}
\mathbf V_{u,i}^{(\tau)}
=
\mathcal P_{\mathcal W_{u,i}}
\left[
\mathbf V_{u,i}^{(\tau-1)}
+
\alpha_{u,i}^{(\tau)}
\left(
\widetilde{\mathbf V}_{u,i}^{\mathrm{FP},(\tau)}
-\mathbf V_{u,i}^{(\tau-1)}
\right)
\right].
\end{equation}
When $\alpha_{u,i}^{(\tau)}=1$, \eqref{eq:update_rule_pnet0} reduces to the conventional projected FastFP update. Values in $(0,1)$ yield under-relaxation, whereas values in $(1,2)$ provide controlled over-relaxation beyond the surrogate maximizer. Thus, P-Net adjusts the progress along the analytically derived FastFP direction. The independent per-beamformer projection preserves the monotonic nondecreasing property suggested by this geometry, as formally established in the following section.

\subsubsection{Monotonicity and Convergence Guarantees}
We next show that the same interval preserves ascent after independently projecting each beamformer onto its feasible set. For a fixed RBG allocation $\mathbf{k}$, let
$f(\mathbf V;\mathbf k)$ denote the WSR objective in \eqref{eq:P1}. 

\begin{lemma}[Monotonicity of the bounded P-Net update]
\label{lemma:monotonicity}
Fix an RBG allocation $\mathbf{k}$.
At iteration $\tau$, let the FP auxiliary variables be updated
at $\mathbf{V}^{(\tau-1)}$, and choose
$\lambda_{u,i}^{(\tau)}
\geq
\lambda_{\max}(\mathbf{L}_{u,i}^{(\tau)})$
for every active slot $(u,i)$.
If
$\alpha_{u,i}^{(\tau)}\in(0,2)$
for all active $(u,i)$, then the bounded update in \eqref{eq:update_rule_pnet0} after projection
satisfies
\begin{equation}
f\!\left(\mathbf{V}^{(\tau)};\mathbf{k}\right)
\geq
f\!\left(\mathbf{V}^{(\tau-1)};\mathbf{k}\right).
\end{equation}
Consequently, the WSR sequence generated by the beamforming
updates is monotonically nondecreasing for fixed
$\mathbf{k}$.
\end{lemma}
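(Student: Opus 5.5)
The argument is the standard minorization--maximization chain, with the FastFP minorant $Q_\tau$ in \eqref{eq:paraboloid} as the central object. We show three things in turn: (i) $Q_\tau(\cdot\mid\mathbf V^{(\tau-1)})$ minorizes $f(\cdot;\mathbf k)$ over the feasible set; (ii) it is tight at the anchor, $Q_\tau(\mathbf V^{(\tau-1)}\mid\mathbf V^{(\tau-1)})=f(\mathbf V^{(\tau-1)};\mathbf k)$; (iii) the projected relaxed update does not decrease $Q_\tau$, i.e. $Q_\tau(\mathbf V^{(\tau)}\mid\mathbf V^{(\tau-1)})\ge Q_\tau(\mathbf V^{(\tau-1)}\mid\mathbf V^{(\tau-1)})$. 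Chaining these gives
\[
f(\mathbf V^{(\tau)};\mathbf k)\ge Q_\tau(\mathbf V^{(\tau)}\mid\mathbf V^{(\tau-1)})\ge Q_\tau(\mathbf V^{(\tau-1)}\mid\mathbf V^{(\tau-1)})=f(\mathbf V^{(\tau-1)};\mathbf k).
\]

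For (i) and (ii) we rely on the FP transforms already recalled in Section~\ref{sec:fp}. The Lagrangian dual transform (P2) and the quadratic transform (P3) each give lower bounds on $f$ for any $\mathbf\Gamma,\mathbf Y$. These bounds are tight when $\mathbf\Gamma,\mathbf Y$ take the values in \eqref{eq:gamma_update}--\eqref{eq:y_update} evaluated at $\mathbf V^{(\tau-1)}$. For the nonhomogeneous bound we use $\lambda_{u,i}^{(\tau)}\mathbf I-\mathbf L_{u,i}^{(\tau)}\succeq\mathbf 0$: this gives $(\mathbf V-\mathbf Z)^H(\lambda\mathbf I-\mathbf L)(\mathbf V-\mathbf Z)\succeq\mathbf 0$, so $f_{\text{P4}}\le f_{\text{P3}}$, with equality at $\mathbf Z=\mathbf V$. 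Fixing $\mathbf Z=\mathbf V^{(\tau-1)}$, the function $f_{\text{P4}}$ in $\mathbf V$ is exactly $Q_\tau$. Tightness at the anchor then follows by composing the three equalities.

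Step (iii) is the main point and uses the geometry. First we complete the square: $Q_\tau$ separates across blocks as
\[
-\lambda_{u,i}^{(\tau)}\|\mathbf V_{u,i}-\widetilde{\mathbf V}^{\mathrm{FP},(\tau)}_{u,i}\|_F^2+\text{const}.
\]
So per block it suffices to show that $\mathbf V_{u,i}^{(\tau)}$ is no farther from the center $\mathbf C\triangleq\widetilde{\mathbf V}^{\mathrm{FP},(\tau)}_{u,i}$ than $\mathbf V^{(\tau-1)}_{u,i}$ is. Let $\mathbf X=\mathbf V^{(\tau-1)}+\alpha(\mathbf C-\mathbf V^{(\tau-1)})$ be the unprojected point. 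Then $\mathbf X-\mathbf C=(1-\alpha)(\mathbf V^{(\tau-1)}-\mathbf C)$, and since $|1-\alpha|<1$ for $\alpha\in(0,2)$ we get $\|\mathbf X-\mathbf C\|_F\le\|\mathbf V^{(\tau-1)}-\mathbf C\|_F$.

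The difficulty is that the projection $\mathcal P_{\mathcal W}$ is taken about $\mathbf X$, not about $\mathbf C$. We therefore cannot just invoke nonexpansiveness toward $\mathbf C$ when $\mathbf C$ is infeasible. We plan to handle this with the firm nonexpansiveness and obtuse-angle property of projection onto the convex ball $\mathcal W$. Write $P=\mathcal P_{\mathcal W}$. Since $\mathbf V^{(\tau-1)}\in\mathcal W$, we have $\|P\mathbf X-\mathbf V^{(\tau-1)}\|\le\|\mathbf X-\mathbf V^{(\tau-1)}\|$ together with $\Re\langle \mathbf X-P\mathbf X,\mathbf V^{(\tau-1)}-P\mathbf X\rangle\le 0$. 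Expanding $\|P\mathbf X-\mathbf C\|^2-\|\mathbf V^{(\tau-1)}-\mathbf C\|^2$ and substituting $\mathbf C-\mathbf V^{(\tau-1)}=(\mathbf X-\mathbf V^{(\tau-1)})/\alpha$ should reduce the desired inequality to
\[
\|\mathbf D\|^2\le\tfrac{2}{\alpha}\Re\langle \mathbf D,\mathbf X-\mathbf V^{(\tau-1)}\rangle,\qquad \mathbf D=P\mathbf X-\mathbf V^{(\tau-1)}.
\]
This follows from the projection inequality $\Re\langle \mathbf D,\mathbf X-\mathbf V^{(\tau-1)}\rangle\ge\|\mathbf D\|^2$ combined with $2/\alpha>1$. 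This is the calculation we expect to require the most care. If it fails, a fallback is to use the explicit radial form of $\mathcal P_{\mathcal W}$ and argue in the two-dimensional plane spanned by $\mathbf V^{(\tau-1)}$ and $\mathbf X$.

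Summing the per-block inequalities over the active slots gives (iii). Because both FP transforms and the surrogate are blockwise and $\mathbf k$ is fixed, inactive slots contribute nothing. The monotone sequence statement then follows by induction on $\tau$.
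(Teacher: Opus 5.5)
Your proposal is correct and is essentially the paper's proof: the same minorization--tightness chain for the FastFP surrogate, combined with the projection's obtuse-angle inequality tested at the feasible point $\mathbf V^{(\tau-1)}$, which yields $\Re\langle \mathbf D,\mathbf X-\mathbf V^{(\tau-1)}\rangle\ge\|\mathbf D\|_F^2$. Your completed-square reduction is algebraically the same as the paper's bound $q_{u,i}(\mathbf V_{u,i}^{(\tau)})-q_{u,i}(\mathbf V_{u,i}^{(\tau-1)})\ge\lambda_{u,i}^{(\tau)}(2-\alpha_{u,i}^{(\tau)})\|\mathbf D_{u,i}^{(\tau)}\|_F^2/\alpha_{u,i}^{(\tau)}$, so the nonexpansiveness remark and the planar fallback are not needed, and your explicit derivation of steps (i)--(ii) from the P2/P3/P4 bounds supplies what the paper only asserts.
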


The proof is given in Appendix~\ref{app:proof_lemma1}. Building upon Lemma 1, the following theorem establishes that every accumulation point of the generated beamformer sequence is stationary.

\begin{theorem}
\label{theorem:stationary}
\label{theorem}
Assume that the relaxation factors remain in a compact subset of $(0,2)$. For an RBG allocation $\mathbf k$, every accumulation point of the beamformer sequence generated by \eqref{eq:update_rule_pnet0} is stationary for the beamforming subproblem of \eqref{eq:P4} and the WSR objective sequence converges.
\end{theorem}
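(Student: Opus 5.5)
The argument has two parts. Objective convergence follows from Lemma~\ref{lemma:monotonicity} together with a uniform upper bound. Stationarity follows from a sufficient-increase inequality combined with a fixed-point characterization of the update map.

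\textbf{Convergence of the objective.} For fixed $\mathbf k$, the feasible set $\mathcal W=\prod_{u,i}\mathcal W_{u,i}$ is compact. The WSR $f(\cdot;\mathbf k)$ is continuous on it, because $\mathbf M_{u,i}\succeq\sigma^2\mathbf I$, and is therefore bounded above. By Lemma~\ref{lemma:monotonicity} the sequence $f(\mathbf V^{(\tau)};\mathbf k)$ is nondecreasing, so it converges.

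\textbf{Sufficient increase.} We sharpen the lemma into a quantitative bound. Write $\alpha^{(\tau)}_{u,i}\in[\underline\alpha,\overline\alpha]\subset(0,2)$. Each block of $Q_\tau$ is the isotropic paraboloid $-\lambda\|\mathbf V-\widetilde{\mathbf V}^{\rm FP}\|_F^2+\text{const}$. For the unprojected point $\mathbf W=\mathbf V^{(\tau-1)}+\alpha(\widetilde{\mathbf V}^{\rm FP}-\mathbf V^{(\tau-1)})$, a direct computation gives the gain
\[
\lambda\,\alpha(2-\alpha)\,\|\widetilde{\mathbf V}^{\rm FP}-\mathbf V^{(\tau-1)}\|_F^2 .
\]
Now $\mathcal W_{u,i}$ is a ball centered at the origin, and $\mathbf V^{(\tau-1)}$ lies in it. The convexity and nonexpansiveness argument already used for the lemma, namely that projection does not increase the distance to the center in the relevant sense, should then yield
\[
Q_\tau(\mathbf V^{(\tau)})-Q_\tau(\mathbf V^{(\tau-1)})\ \ge\ c\,\lambda_{\min}\,\|\mathbf V^{(\tau)}-\mathbf V^{(\tau-1)}\|_F^2 ,
\]
where $c>0$ depends only on $\underline\alpha,\overline\alpha$. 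An alternative route treats \eqref{eq:update_rule_pnet0} as a projected gradient step on the concave quadratic with step $\alpha/\lambda<2/\lambda$, and uses the standard descent lemma. We keep $\lambda^{(\tau)}_{u,i}$ bounded below by a fixed positive constant, taking it at least some $\epsilon>0$; it is bounded above by continuity on the compact set. The minorization chain $f(\mathbf V^{(\tau)})\ge Q_\tau(\mathbf V^{(\tau)})+\text{const}$, with equality at $\mathbf V^{(\tau-1)}$, then gives
\[
f(\mathbf V^{(\tau)})-f(\mathbf V^{(\tau-1)})\ge c'\,\|\mathbf V^{(\tau)}-\mathbf V^{(\tau-1)}\|_F^2 .
\]
Summing and using boundedness shows $\|\mathbf V^{(\tau)}-\mathbf V^{(\tau-1)}\|_F\to0$.

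\textbf{Stationarity of accumulation points.} Let $\mathbf V^{(\tau_j)}\to\bar{\mathbf V}$. Then $\mathbf V^{(\tau_j-1)}\to\bar{\mathbf V}$ as well. By compactness, pass to a further subsequence along which $\alpha^{(\tau_j)}\to\bar\alpha\in[\underline\alpha,\overline\alpha]$ and $\lambda^{(\tau_j)}\to\bar\lambda$. The maps $\mathbf\Gamma(\cdot)$, $\mathbf Y(\cdot)$, $\mathbf\Lambda$, $\mathbf L$, $\widetilde{\mathbf V}^{\rm FP}$ and the projection are all continuous. Taking limits in \eqref{eq:update_rule_pnet0} therefore gives the fixed-point relation
\[
\bar{\mathbf V}_{u,i}=\mathcal P_{\mathcal W_{u,i}}\!\left[\bar{\mathbf V}_{u,i}+\tfrac{\bar\alpha}{\bar\lambda}\bigl(\mathbf\Lambda_{u,i}-\mathbf L_{u,i}\bar{\mathbf V}_{u,i}\bigr)\right].
\]
Since the step size $\bar\alpha/\bar\lambda$ is positive, this relation is equivalent to the variational inequality
\[
\Re\operatorname{Tr}\bigl((\mathbf V-\bar{\mathbf V})^H\nabla\bigr)\le0 \quad \text{for all } \mathbf V\in\mathcal W,
\]
where $\nabla$ is the partial gradient of $f_{\rm P3}$ (equivalently of $f_{\rm P4}$ with $\mathbf Z=\bar{\mathbf V}$) in $\mathbf V$. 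The auxiliary variables are at their optimal closed forms \eqref{eq:gamma_update}--\eqref{eq:y_update} evaluated at $\bar{\mathbf V}$. By the envelope property of the FP transforms, this gradient coincides with $\nabla_{\mathbf V}f(\bar{\mathbf V};\mathbf k)$. Hence $\bar{\mathbf V}$ is a KKT (stationary) point of the beamforming subproblem.

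\textbf{Main obstacle.} The delicate step is the sufficient-increase bound \emph{after projection} when $\alpha>1$. For over-relaxation, the unprojected point overshoots the center, and we need a lower bound proportional to $\|\mathbf V^{(\tau)}-\mathbf V^{(\tau-1)}\|^2$ rather than only nonnegativity. My first attempt would use the ball geometry: radial scaling toward the origin, combined with the three-point property of projection onto a convex set and the strict bound $\alpha\le\overline\alpha<2$. If that fails, I would fall back on the projected-gradient descent lemma with step size strictly less than $2/\lambda$.
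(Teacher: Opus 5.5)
\textbf{Same route as the paper, but the key estimate is left open.} Your argument follows the paper's structure. Monotonicity plus boundedness gives convergence of the WSR. A sufficient-ascent inequality forces $\|\mathbf V^{(\tau)}-\mathbf V^{(\tau-1)}\|_F\to0$. Passing to the limit in the update map gives a projected fixed point whose first-order condition matches that of the WSR, by tightness of the FP transforms. The step you leave at ``should then yield'' is the only nontrivial estimate. It is not delicate, and the paper proves it in Lemma~\ref{lemma:monotonicity} using exactly your fallback. Set
\[
\mathbf G=\mathbf\Lambda_{u,i}-\mathbf L_{u,i}\mathbf V^{(\tau-1)}_{u,i}=\lambda\bigl(\widetilde{\mathbf V}^{\rm FP}_{u,i}-\mathbf V^{(\tau-1)}_{u,i}\bigr),\qquad \mathbf D=\mathbf V^{(\tau)}_{u,i}-\mathbf V^{(\tau-1)}_{u,i}.
\]
The projection of $\mathbf V^{(\tau-1)}_{u,i}+\tfrac{\alpha}{\lambda}\mathbf G$ has a variational characterization. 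Testing it at the feasible point $\mathbf V^{(\tau-1)}_{u,i}$ gives $\Re\operatorname{Tr}(\mathbf G^H\mathbf D)\ge\tfrac{\lambda}{\alpha}\|\mathbf D\|_F^2$. Now expand the block term $q_{u,i}(\mathbf V_{u,i})=2\Re\operatorname{Tr}(\mathbf V_{u,i}^H\mathbf\Xi^{(\tau)}_{u,i})-\lambda\|\mathbf V_{u,i}\|_F^2$:
\[
q_{u,i}\bigl(\mathbf V^{(\tau)}_{u,i}\bigr)-q_{u,i}\bigl(\mathbf V^{(\tau-1)}_{u,i}\bigr)
=2\Re\operatorname{Tr}\bigl(\mathbf G^H\mathbf D\bigr)-\lambda\|\mathbf D\|_F^2
\ge\lambda\,\frac{2-\alpha}{\alpha}\,\|\mathbf D\|_F^2 .
\]
Sum over blocks and use $f(\mathbf V^{(\tau)};\mathbf k)\ge Q_\tau(\mathbf V^{(\tau)})$ together with $f(\mathbf V^{(\tau-1)};\mathbf k)=Q_\tau(\mathbf V^{(\tau-1)})$. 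This gives your inequality with $c'=\epsilon(2-\overline\alpha)/\overline\alpha$. The argument works for any closed convex feasible set containing the previous iterate. It needs no ball geometry and no case split at $\alpha=1$.

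\textbf{Drop the primary route; keep the explicit $\lambda$ bounds.} Your first route does not work as stated. The unprojected gain $\lambda\alpha(2-\alpha)\|\widetilde{\mathbf V}^{\rm FP}-\mathbf V^{(\tau-1)}\|_F^2$ is the wrong intermediate quantity. When $\widetilde{\mathbf V}^{\rm FP}$ lies outside the ball, projection can strictly increase the distance to the paraboloid center; for $\alpha=1$ the unprojected point is the center itself. So no nonexpansiveness statement about distance to the center can transfer that gain to $\mathbf V^{(\tau)}$. The comparison has to be made directly against $\mathbf V^{(\tau-1)}$, as above. Your explicit requirement $\epsilon\le\lambda^{(\tau)}_{u,i}\le\overline\lambda$ is a real sharpening of what the paper calls ``uniform parameter bounds.'' The lower bound makes $c>0$ and keeps the step well defined, since $\mathbf L_{u,i}=\mathbf 0$ when every beamformer on that RBG vanishes. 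The upper bound keeps the limiting step $\bar\alpha/\bar\lambda$ positive in the fixed-point relation.
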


The proof is given in Appendix~\ref{app:proof_theorem1}. By strictly bounding the learnable relaxation factor, the proposed P-net successfully decouples the pursuit of acceleration from the risk of divergence.

\subsubsection{P-Net with Nesterov Acceleration}

P-Net incorporates Nesterov acceleration by replacing the previous iterate in the FastFP guide with the extrapolated anchor $\mathbf U^{(\tau-1)}$ defined in \eqref{eq:nesterov_anchor_update}. Since extrapolation may destroy monotonicity, the FP auxiliary variables $\mathbf\Gamma^{(\tau)}$ and $\mathbf Y^{(\tau)}$ are first evaluated at $\mathbf V^{(\tau-1)}$ and kept fixed during the following safety check. The momentum candidate is accepted only if
\begin{equation}
\label{eq:nesterov_safeguard}
f_{\rm P3}\!\left(
\mathbf V_{\rm cand}^{(\tau)},
\mathbf\Gamma^{(\tau)},
\mathbf Y^{(\tau)}
\right)
\ge
f_{\rm P3}\!\left(
\mathbf V^{(\tau-1)},
\mathbf\Gamma^{(\tau)},
\mathbf Y^{(\tau)}
\right).
\end{equation}
Because the transformed objective is a lower bound on the WSR and is tight at $\mathbf V^{(\tau-1)}$, this condition guarantees a nondecreasing WSR. If the check fails, we set $\mathbf U^{(\tau-1)}=\mathbf V^{(\tau-1)}$ and the update is recomputed from the anchor. Since the check requires only the existing surrogate quantities, it adds limited overhead while preserving the guarantee of monotonic nondecrease.

\subsubsection{Network Architecture and Training}
To avoid excessive physical features and dimensionality explosion, we construct the input state of P-Net using an optimization-state-based feature vector. For any active link serving user $u$ on slot $i$ at iteration $\tau$, we define the FastFP update direction and the Nesterov momentum direction as $\mathbf{D}_{u,i}^{(\tau)} = \mathbf{V}_{u,i}^{\text{FP}} - \mathbf{U}_{u,i}^{(\tau-1)}$ and $\mathbf{M}_{u,i}^{(\tau)} = \mathbf{U}_{u,i}^{(\tau-1)} - \mathbf{V}_{u,i}^{(\tau-1)}$, respectively. The six-dimensional compact input feature vector $\mathbf{s}_{u,i}^{(\tau)}$ is constructed entirely from quantities already available during the standard FastFP update:
\begin{equation}
\label{eq:s_feat}
\mathbf{s}_{u,i}^{(\tau)} = \big[ s_{1}, s_{2}, s_{3}, s_{4}, s_{5}, s_{6} \big].
\end{equation}
These six features sequentially correspond to:
\begin{itemize}
    \item \textit{Update magnitude} ($s_{1} = \frac{\|\mathbf{D}_{u,i}^{(\tau)}\|_F}{\|\mathbf{U}_{u,i}^{(\tau-1)}\|_F}$): captures the relative step size of the current FastFP guide.
    \item \textit{Curvature tightness} ($s_{2} = \frac{\text{Tr}(\mathbf{L}_{u,i})/M_t}{\lambda_{u,i}}$): evaluates whether the worst-case majorization bound is overly conservative.
    \item \textit{Power occupation} ($s_{3} = \frac{\|\mathbf{U}_{u,i}^{(\tau-1)}\|_F^2}{P_{\max}}$): indicates the proximity to the feasible power boundary.
    \item \textit{Projection risk} ($s_{4} = \max\Big(0, \frac{\|\mathbf{V}_{u,i}^{\text{FP}}\|_F^2}{P_{\max}} - 1\Big)$): assesses the likelihood of the candidate being truncated by the power projection.
    \item \textit{Momentum consistency} ($s_{5} = \frac{\Re\{\text{Tr}((\mathbf{D}_{u,i}^{(\tau)})^H \mathbf{M}_{u,i}^{(\tau)})\}}{\|\mathbf{D}_{u,i}^{(\tau)}\|_F \|\mathbf{M}_{u,i}^{(\tau)}\|_F }$): measures the alignment between the Nesterov momentum and the current gradient direction to determine momentum reliability.
    \item \textit{Iteration stage} ($s_{6} = \frac{\tau}{\tau + c}$): allows the network to distinguish between early-stage exploration and late-stage refinement via a bounded scale constant $c>0$.
\end{itemize}
These features summarize the FastFP step, local curvature, constraint activity, momentum consistency, and iteration progress. 

A recurrent controller $\mathcal F_{\theta}$ shared across all unfolding iterations maps these features to a bounded relaxation factor:
\begin{equation}
\begin{aligned}
\bigl(z_{u,i}^{(\tau)},\mathbf h_{u,i}^{(\tau)}\bigr)
&=
\mathcal F_{\theta}\!\left(
\mathbf s_{u,i}^{(\tau)},
\mathbf h_{u,i}^{(\tau-1)}
\right),\\
\alpha_{u,i}^{(\tau)}
&=
2\,\operatorname{Sigmoid}\!\left(z_{u,i}^{(\tau)}\right).
\end{aligned}
\end{equation}
The resulting $\alpha_{u,i}^{(\tau)}\in(0,2)$ is used in
\eqref{eq:update_rule_pnet0}. Parameter sharing keeps the model size independent of the unfolding depth and permits inference beyond the training horizon.

P-Net is trained without beamforming labels by minimizing the negative terminal WSR:
\begin{equation}
\label{eq:pnet_loss}
\mathcal L_{\rm P}(\theta)
=
-\mathbb E_{(\mathbf H,\mathbf V^{(0)},\mathbf k)\sim\mathcal D}
\left[
\sum_{u\in[U]}\rho_u
\sum_{i\in[N]}R_{u,i}^{(T)}(\theta)
\right].
\end{equation}
We employ truncated backpropagation through time while propagating the optimization variables and recurrent states over the complete rollout. The resulting training procedure is summarized in Algorithm~\ref{alg:bounded_training}.
\subsection{K-Net: Policy-Guided Priority RBG Scheduling}
Hungarian matching maximizes the current FP surrogate but does not account for how the selected RBG pattern affects subsequent interference and beamforming updates. K-Net therefore learns only the priority order in which logical slots access the analytical edge utilities $\xi_{u,i,n}^{(\tau)}$. The ordered slots then select RBGs through a sequential greedy decoder, reducing the assignment complexity from $\mathcal O(UK^3)$ to $\mathcal O(UK^2)$.

\subsubsection{Policy Formulation and Input Features}
\label{subsubsec:knet_mdp_framework}
K-Net is formulated as a finite-horizon policy embedded in the alternating optimization loop. At iteration $\tau$, the input state aggregates the underlying multi-cell topology and algorithmic states, which will be detailed later.
Its action
$a^{(\tau)}
=\{\boldsymbol{\pi}_{u}^{(\tau)}\}_{u=1}^{U}$
contains one slot-priority permutation for each user. The priority-greedy decoder and the subsequent analytical updates deterministically produce the next state. Conditioned on $\mathsf{s}^{(\tau)}$, the per-user permutations are sampled independently, giving the joint policy
\begin{equation}
\label{eq:knet_joint_policy}
p_{\theta}\!\left(a^{(\tau)}\mid\mathsf{s}^{(\tau)}\right)
=
\prod_{u=1}^{U}
p_{\theta}\!\left(
\boldsymbol{\pi}_{u}^{(\tau)}
\mid\mathsf{s}^{(\tau)}
\right).
\end{equation}
For each training sample $b$, the reference trajectory starts from the same initialization and applies FastFP with Hungarian matching for the same horizon $T$. The reward is the terminal WSR gain and no temporal discounting is required.

For the state feature, we construct a compact feature vector $\mathbf{e}_{u,i,n}^{(\tau)}$ for each candidate edge $(u,i,n)$, which contains the following features:
\begin{itemize}
    \item \textit{Edge Utility:}
    The analytical weight $\xi_{u,i,n}^{(\tau)}$, together with its
    desired-signal and interference-penalty components in
    \eqref{eq:xi_weight}, describes the candidate assignment utility.

    \item \textit{Allocation Status:}
    The binary assignment $x_{u,i,n}^{(\tau-1)}$ and slot-activation flag
    $\delta_{u,i}^{(\tau-1)}
    =\sum_{m\in[K]}x_{u,i,m}^{(\tau-1)}$
    indicate the current allocation state.

    \item \textit{Candidate Power:}
    The normalized candidate power
    $p_{u,i,n}^{(\tau)}
    =\|\widehat{\mathbf V}_{u,i,n}^{(\tau)}\|_F^2/P_{\max}$
    measures proximity to the power boundary.

    \item \textit{Effective Rate:}
    The quantity
    $r_{u,i}^{\rm eff,(\tau)}
    =d_u^{-1}\log\det(
    \mathbf I_{d_u}+\boldsymbol{\Gamma}_{u,i}^{(\tau)})$
    summarizes the current link quality.

    \item \textit{Interference Sensitivity:}
    The normalized decoding trace
    $\chi_{u,i}^{(\tau)}
    =\operatorname{Tr}(
    \boldsymbol{\Psi}_{u,i}^{(\tau)})/(M_r d_u)$
    reflects sensitivity to co-channel interference.

    \item \textit{Iteration Progress:}
    The bounded index
    $\eta_\tau=\tau/(\tau+c_{\rm K})$
    distinguishes early and late rollout stages.
\end{itemize}

\subsubsection{Global-Pooling Priority Network}
K-Net maps the feature vector
$\mathbf{E}^{(\tau)}=\{\mathbf{e}_{u,i,n}^{(\tau)}\}$
to slot-priority logits through shared edge encoding and hierarchical
mean pooling. Each candidate edge is first embedded as
$\mathbf{h}_{u,i,n}^{0}
=\psi_e(\mathbf{e}_{u,i,n}^{(\tau)})\in\mathbb{R}^{d_h}$.
Let $\mathcal{Q}=\{S,U,R,B,G\}$ denote the slot-, user-, RBG-,
BS-, and global-level contexts, respectively. Their contextual
representations are compactly written as
\begin{equation}
\mathbf{c}_{u,i,n}^{q}
=
\psi_q\left(
\frac{1}{|\Omega_{u,i,n}^{q}|}
\sum_{(u',i',n')\in\Omega_{u,i,n}^{q}}
\mathbf{h}_{u',i',n'}^{0}
\right),
\qquad q\in\mathcal{Q},
\label{eq:knet_hierarchical_pooling}
\end{equation}
where $\Omega_{u,i,n}^{q}$ respectively contains the edges associated
with slot $(u,i)$, user $u$, RBG $n$, serving BS $b_u$, or the entire
network.

The resulting contexts are fused with the local edge embedding, and
the priority logit of slot $(u,i)$ is obtained by averaging over its
candidate RBGs:
\begin{equation}
o_{u,i}^{(\tau)}
=
\psi_{\mathrm{out}}\left(
\frac{1}{K}\sum_{n\in[K]}
\psi_F\left(
\mathbf{h}_{u,i,n}^{0}
+
\sum_{q\in\mathcal{Q}}\mathbf{c}_{u,i,n}^{q}
\right)
\right).
\label{eq:knet_slot_logit}
\end{equation}
All network modules are shared across users, slots, RBGs, BSs, and
unfolding iterations. Therefore, the number of trainable parameters is
independent of $(U,N,K)$, while the forward complexity scales linearly
with the number of candidate edges.

\subsubsection{Priority-Greedy RBG Decoding}
During training, K-Net samples each user's slot-priority permutation from the
Plackett--Luce distribution
\begin{equation}
\label{eq:knet_priority_policy}
\begin{aligned}
p_{\theta}\!\left(
\boldsymbol{\pi}_{u}^{(\tau)}
\mid\mathsf{s}^{(\tau)}
\right)
=
\prod_{\ell=1}^{N}
\frac{
\exp\!\left(o_{u,\pi_{u,\ell}}^{(\tau)}/T_K\right)
}{
\sum_{j\in\mathcal J_{u,\ell}}
\exp\!\left(o_{u,j}^{(\tau)}/T_K\right)
},
\end{aligned}
\end{equation}
where $\mathcal J_{u,\ell}$ contains the unselected slots and $T_K$ is the
sampling temperature. At inference, the permutation is obtained by sorting
the logits in descending order.

Given $\boldsymbol{\pi}_{u}^{(\tau)}$, let $\mathcal R_{u,\ell}$ denote the
RBGs remaining before decision $\ell$. The sequential decoder assigns
\begin{equation}
\label{eq:knet_greedy_decoding}
\begin{aligned}
n_{u,\ell}^{\star}
&=
\arg\max_{n\in\mathcal R_{u,\ell}}
\xi_{u,\pi_{u,\ell},n}^{(\tau)},\\
k_{u,\pi_{u,\ell}}^{(\tau)}
&=
\begin{cases}
n_{u,\ell}^{\star},
&\xi_{u,\pi_{u,\ell},n_{u,\ell}^{\star}}^{(\tau)}>0,\\
\emptyset,
&\text{otherwise}.
\end{cases}
\end{aligned}
\end{equation}
Only an assigned RBG is removed from $\mathcal R_{u,\ell}$, ensuring a
feasible per-user one-to-one assignment. The network forward pass, sorting,
and sequential decoding require
$\mathcal O(UNK+UN\log N)$ operations, which reduces to
$\mathcal O(UK^2)$; this comparison concerns only the scheduling
step, whereas per-user Hungarian matching requires $\mathcal O(UK^3)$.

\subsubsection{Policy-Gradient Training}
K-Net maximizes the expected terminal WSR
$\mathcal J(\theta)
=\mathbb E_{\mathcal D,p_\theta}
[F(\mathbf V_{\theta}^{(T)},\mathbf k_{\theta}^{(T)})]$
using REINFORCE \cite{sutton1999policy}. For training sample $m$, the
FastFP--Hungarian reference uses the same channel, initialization, and horizon,
and defines the advantage
$A_m=
F(\mathbf V_{\theta,m}^{(T)},\mathbf k_{\theta,m}^{(T)})
-F(\mathbf V_{{\rm ref},m}^{(T)},\mathbf k_{{\rm ref},m}^{(T)})$.
Since the reference is independent of the sampled actions and $\theta$, the
policy-gradient estimator is
\begin{equation}
\label{eq:knet_reinforce_final}
\nabla_{\theta}\mathcal J(\theta)
=
\mathbb E_{\mathcal D,p_\theta}\!\left[
A_m\sum_{\tau=1}^{T}
\nabla_{\theta}
\log p_{\theta}\!\left(
a^{(\tau)}\mid\mathsf{s}^{(\tau)}
\right)
\right],
\end{equation}
The complete training procedure is summarized in
Algorithm~\ref{alg:bounded_training}.

\begin{algorithm}[t!]
\caption{Training of P-Net and K-Net}
\label{alg:bounded_training}
\begin{algorithmic}[1]
\STATE \textbf{input:} Training dataset $\mathcal D$ and unfolding horizon $T$.
\STATE Initialize P-Net parameters $\theta_{\rm P}$ and K-Net parameters $\theta_{\rm K}$.
\STATE \textbf{[P-Net Training with Fixed RBG Allocation]}
\REPEAT
    \STATE Sample a mini-batch $(\mathbf H, \mathbf V^{(0)}, \mathbf k^{(0)}) \sim\mathcal D$.
    \FOR{$\tau=1,\ldots,T$}
        \STATE Update $\mathbf\Gamma^{(\tau)}$ and $\mathbf Y^{(\tau)}$ via \eqref{eq:gamma_update} and \eqref{eq:y_update}.
        \STATE Extract $\mathbf s^{(\tau)}$, predict $\alpha^{(\tau)}\in(0,2)$, and update $\mathbf V^{(\tau)}$ via \eqref{eq:update_rule_pnet0}.
    \ENDFOR
    \STATE Update $\theta_{\rm P}$ by minimizing the negative terminal WSR in \eqref{eq:pnet_loss} using TBPTT.
\UNTIL{the P-Net training loss converges}

\STATE \textbf{[K-Net Training with FastFP]}
\REPEAT
    \STATE Sample a mini-batch $(\mathbf H, \mathbf V^{(0)}, \mathbf k^{(0)}) \sim\mathcal D$.
    \FOR{$\tau=1,\ldots,T$}
        \STATE Update the beamformers using FastFP and compute edge features $\mathbf E^{(\tau)}$.
        \STATE Sample slot priorities via \eqref{eq:knet_priority_policy} and update $\mathbf k^{(\tau)}$ using \eqref{eq:knet_greedy_decoding}.
    \ENDFOR
    \STATE Compute the terminal WSR reward and update $\theta_{\rm K}$ using \eqref{eq:knet_reinforce_final}.
\UNTIL{the K-Net training objective converges}
\STATE \textbf{output:} Trained parameters $\theta_{\rm P}^{*}$ and $\theta_{\rm K}^{*}$.
\end{algorithmic}
\end{algorithm}
\section{Numerical Results}
\label{sec:exp_results}
In this section, we first evaluate P-Net under fixed RBG assignments to isolate its beamforming acceleration capability. We then assess the complete P-Net and K-Net framework under joint RBG scheduling and beamforming optimization.
\subsection{Simulation Setup}
We consider a multi-cell downlink MU-MIMO network with a hexagonal cell layout, where each BS is located at the center of a cell, and users are uniformly distributed within the service area. All BSs share the same set of RBGs, and the distance between adjacent BSs is set to $D=0.8$ km. The number of data streams is identical for all users and satisfies $d_u = 2$. Each BS is equipped with $M_t = 32$ transmit antennas, and each user is equipped with $M_r = 4$ receive antennas. The maximum transmit power is $P_{\max}=40$ dBm. The noise power density is $N_0=-174$ dBm/Hz, and the bandwidth is $180$ kHz. The path loss is $PL(r) = 128.1 + 37.6 \log_{10}(r) + \xi$,
where $r$ is in kilometers and $\xi \sim \mathcal{N}(0,8^2)$ models log-normal shadowing. All user weights are $\rho_u=1$. Independent channel realizations are split into training and test sets. Within each comparison, all methods share the same RBG assignments and initial beamformers.

P-Net employs a single-layer GRU with 128 hidden units and is trained using Adam with a learning rate of $10^{-3}$, a batch size of 8, and TBPTT over $T=50$ iterations with truncation length $l=10$. K-Net consists of two-layer MLPs with 32 hidden units and is optimized using Adam with an initial learning rate of $10^{-4}$ and a gradually decaying learning-rate schedule for stable policy-gradient training.
All methods are implemented in Python 3.9.20 using PyTorch
2.5.1 with CUDA 11.8 and evaluated on a workstation equipped
with an NVIDIA GeForce RTX 4070 SUPER GPU and an AMD Ryzen 7 9700X CPU.

\subsection{Performance Evaluation under Fixed RBG Allocation}
To strictly isolate the contribution of the continuous beamforming optimizer, we first evaluate P-Net under fixed RBG assignments. The safety check is disabled to isolate the intrinsic acceleration effect of the learned relaxation factor.

\subsubsection{WSR Performance and Acceleration Ability}
We evaluate the beamforming performance of P-Net against representative iterative baselines. Specifically, we compare P-Net with FastFP, Nesterov-FastFP, and a recently proposed fixed-depth deep-unfolded FP method, DeepFP \cite{zhu2026deepfp}. Two variants of DeepFP are considered: DeepFP-Hybrid, which is first pretrained using teacher labels and then fine-tuned for weighted sum rate (WSR), and DeepFP-Unsupervised, which is trained directly to maximize the terminal WSR.
We first consider $(B,U,K)=(7,28,14)$ with 1000 channel realizations. After 50 iterations, P-Net averages $4096.58$ WSR, exceeding Nesterov-FastFP and FastFP by $13.55\%$ and $77.52\%$, respectively. The right-shifted CDF in Fig.~\ref{fig:perf0_cdf} shows that this gain persists across the sample-wise distribution.
\begin{figure}[!t]
    \centering
    \includegraphics[width=\columnwidth]{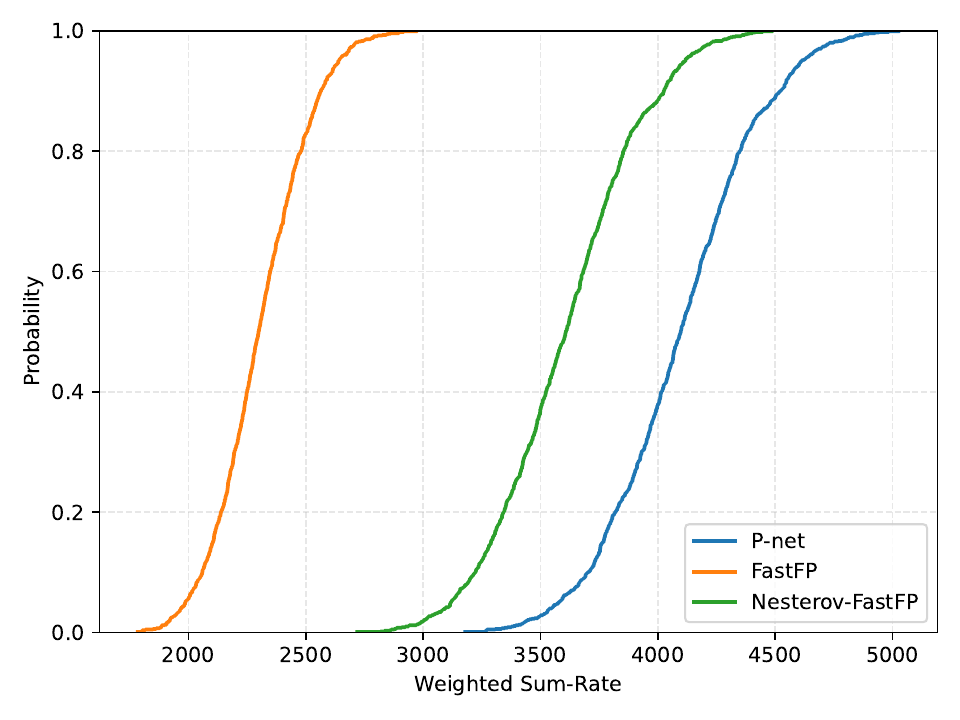}
    \caption{Final-WSR CDF after 50 iterations for $(B,U,K)=(7,28,14)$.}
    \label{fig:perf0_cdf}
\end{figure}

To jointly examine fixed-horizon performance and convergence speed, Table~\ref{tab:pnet_deepfp_runtime_merged} reports the average WSR, runtime, and time to target. For both DeepFP variants, we fix the unfolded depth to $T=8$ layers. This configuration is primarily chosen to maintain a comparable number of model parameters with P-Net, whereas P-Net is evaluated at both 8 and 50 iterations to characterize its recurrent parameter-sharing behavior. The reference target is the average WSR attained by Nesterov-FastFP after 50 iterations.

At $8$ iterations, P-Net attains the highest WSR ($2652.66$), followed by DeepFP-Unsupervised ($2380.75$) and DeepFP-Hybrid ($2068.75$). Their runtimes are comparable: $20.45$, $19.30$, and $20.10$ ms per CSI, respectively. With shared parameters, P-Net can continue beyond this fixed depth: at 50 iterations, it achieves $4096.58$, which is $13.55\%$ above Nesterov-FastFP, with only $0.56\%$ more full-horizon runtime. It reaches the Nesterov-FastFP target in 34 iterations and an estimated $79.32$ ms, reducing time to target by $31.62\%$; FastFP does not reach this target within 50 iterations. These results show that P-Net provides the strongest fixed-horizon WSR among the learned methods considered here and can continue improving beyond $8$ iterations while retaining model parameter sharing.

\begin{table*}[!t]
\caption{WSR and Runtime Comparison Under Fixed RBG Assignment for $(B,U,K)=(7,28,14)$}
\label{tab:pnet_deepfp_runtime_merged}
\centering
\small
\renewcommand{\arraystretch}{1.15}
\setlength{\tabcolsep}{4.5pt}
\begin{tabular}{lccccc}
\toprule
Method
& Inference Horizon
& Avg. WSR
& Runtime (ms/CSI)
& Iter. to Target
& Time to Target (ms) \\
\midrule
FastFP
& 50 iter.
& $2307.63$
& $103.59$
& $>50$
& -- \\

Nesterov-FastFP
& 50 iter.
& $3607.78$
& $116.01$
& 50
& $116.01$ (100\%) \\

DeepFP-Unsupervised
& 8 layers
& $2380.75$
& $19.30$
& --
& -- \\

DeepFP-Hybrid
& 8 layers
& $2068.75$
& $20.10$
& --
& -- \\

P-Net
& 8 iter.
& $\mathbf{2652.66}$
& $20.45$
& $>8$
& -- \\

P-Net
& 50 iter.
& $\mathbf{4096.58}$
& $116.65$
& $\mathbf{34}$
& $\mathbf{79.32}$ (68.38\%) \\
\bottomrule
\end{tabular}

\end{table*}

We next evaluate the denser $(B,U,K)=(7,35,20)$ setting on 1000 channel realizations. After 500 iterations, P-Net averages $9651.86$ WSR, $6.81\%$ above Nesterov-FastFP and $75.19\%$ above FastFP. The right-shifted histogram in Fig.~\ref{fig:perf1_dist} confirms the distribution-level gain.
\begin{figure}[!t]
    \centering
    \includegraphics[width=\columnwidth]{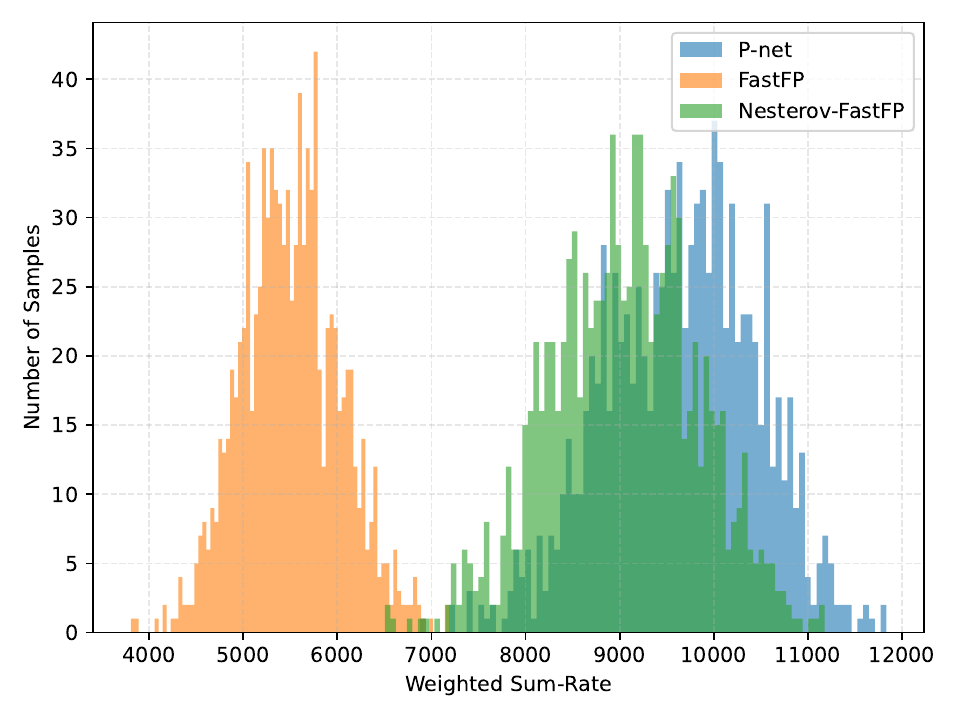}
    \caption{Final WSR distribution after 500 iterations for $(B,U,K)=(7,35,20)$.}
    \label{fig:perf1_dist}
\end{figure}

\begin{figure}[!t]
    \centering
    \includegraphics[width=\columnwidth]{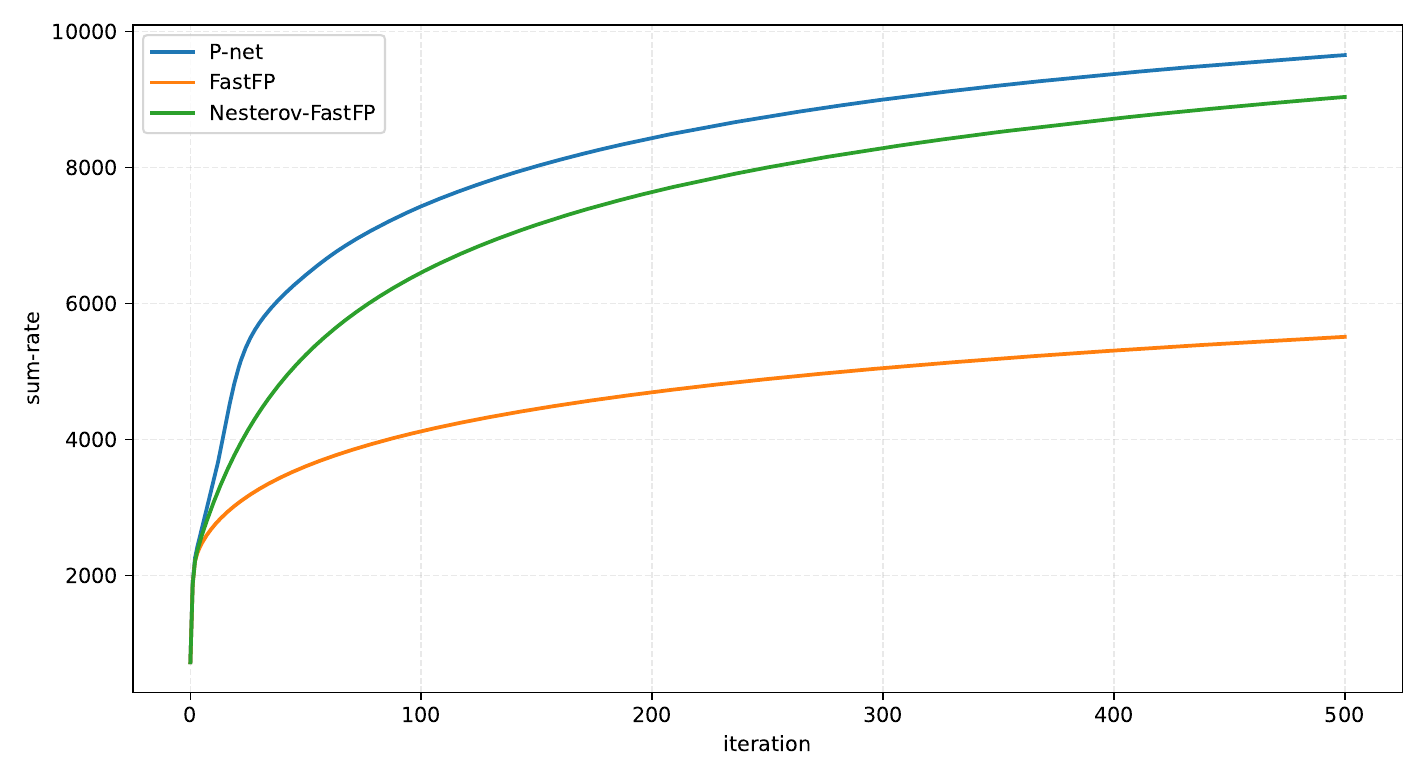}
\caption{Average WSR trajectories after 500 iterations for $(B,U,K)=(7,35,20)$.}
\label{fig:perf1_curve}
\end{figure}
To further evaluate the convergence behavior of the P-net, Fig.~\ref{fig:perf1_curve} shows that P-Net maintains a steeper WSR ascent than FastFP and Nesterov-FastFP over 500 iterations. As illustrated in Fig.~\ref{fig:alpha}, the learned relaxation factors remain within $(0,2)$ and are larger than the FastFP reference $\alpha=1$, with more aggressive values in early iterations and smaller values during later refinement. This behavior supports the interpretation of P-Net as an adaptive, bounded-step-size controller.

\begin{figure}[!t]
    \centering
    \includegraphics[width=\columnwidth]{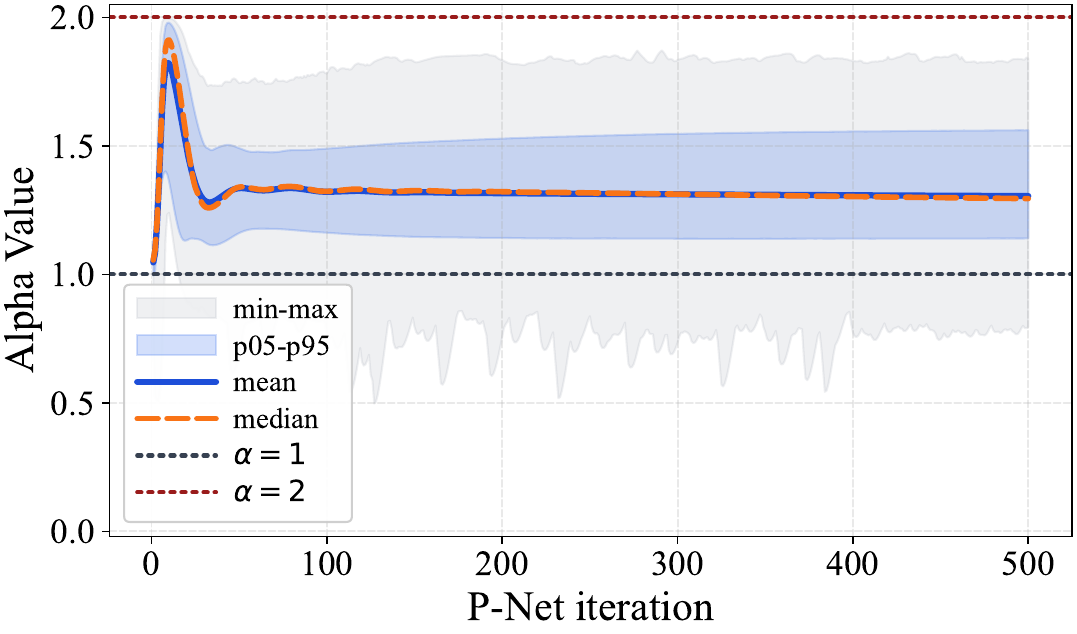}
    \caption{Statistical distribution (mean, median, and percentiles) of the learned relaxation factor $\alpha$ over 500 iterations for $(B,U,K)=(7,35,20)$.}
    \label{fig:alpha}
\end{figure}

\subsubsection{Generalization Under Network-Scale Variations}

Since P-Net operates on the physics features of the FastFP framework and shares parameters across user--RB links, it is expected to generalize more naturally across different values of $U$ and $K$ than a fully fixed-dimensional NN. We evaluate this property by training P-Net in the network with $B=7$ BSs, $U=28$ users, and $K=14$ RBGs, and then testing it under network settings with different numbers of users or RBGs. The first row of Fig.~\ref{fig:gen_rb1}--\ref{fig:gen_ue3} fixes $(B,U)=(7,28)$ and varies $K\in\{21,28,35\}$; the second fixes $(B,K)=(7,14)$ and varies $U\in\{35,42,49\}$. 

P-Net outperforms both baselines in all six cases. Relative to Nesterov-FastFP, its average-WSR gains are $36.16\%$--$36.45\%$ across RBG counts and $39.93\%$--$47.07\%$ across user counts, showing transfer across dimensions and a higher per-link power budget.

\begin{figure*}[!t]
    \centering
    \setlength{\tabcolsep}{2pt}
    \renewcommand{\arraystretch}{1.0}

    \begin{tabular}{ccc}
        \subfloat[$K=21$]{
            \includegraphics[width=0.31\textwidth]{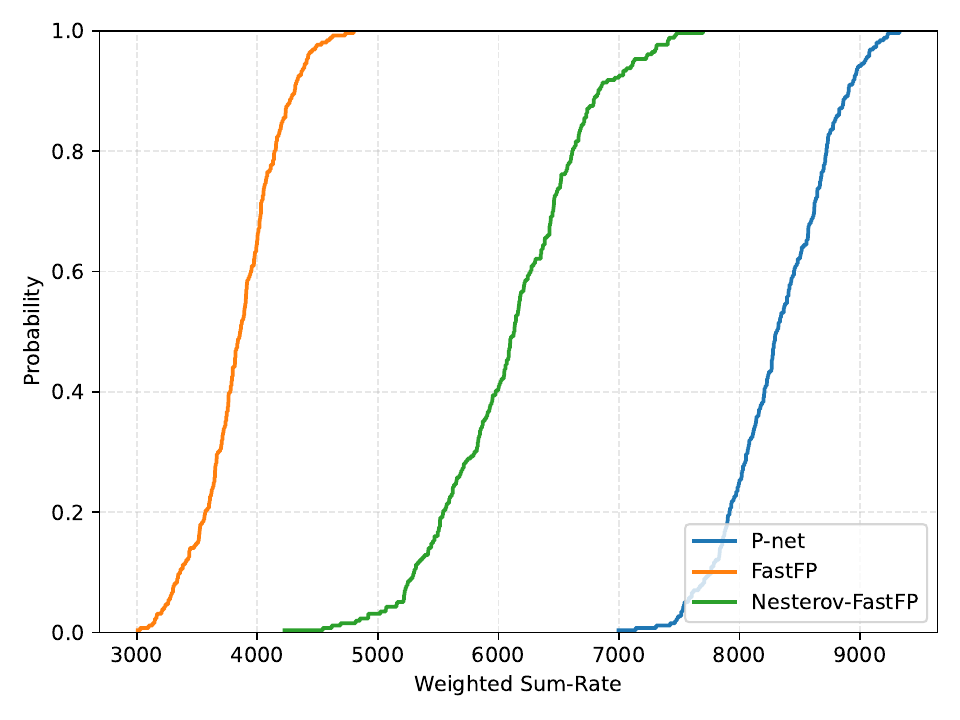}
            \label{fig:gen_rb1}
        }
        &
        \subfloat[$K=28$]{
            \includegraphics[width=0.31\textwidth]{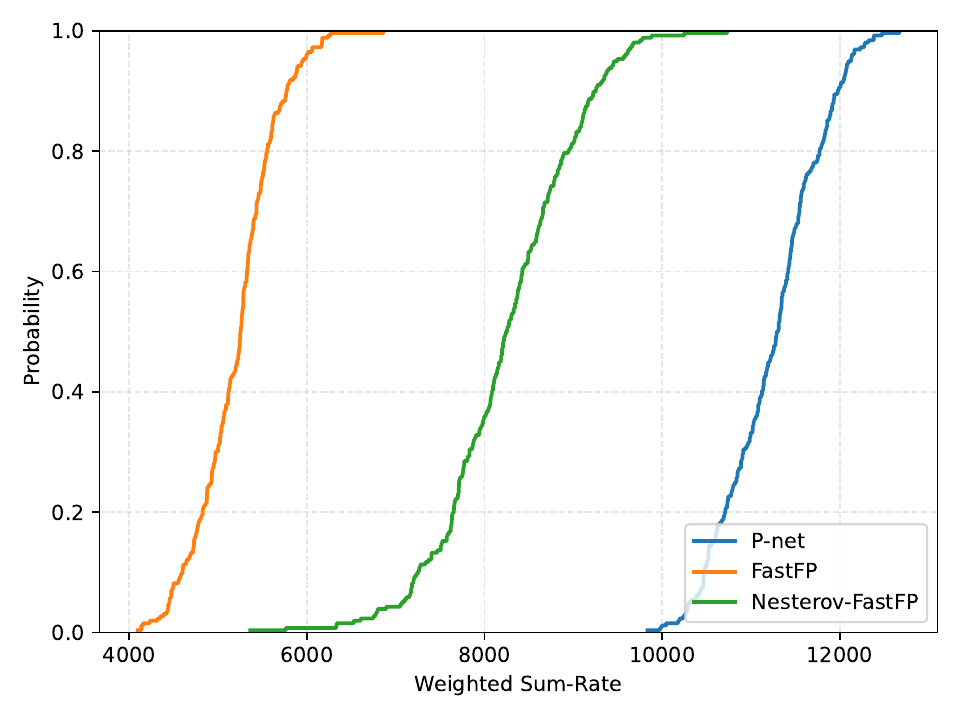}
            \label{fig:gen_rb2}
        }
        &
        \subfloat[$K=35$]{
            \includegraphics[width=0.31\textwidth]{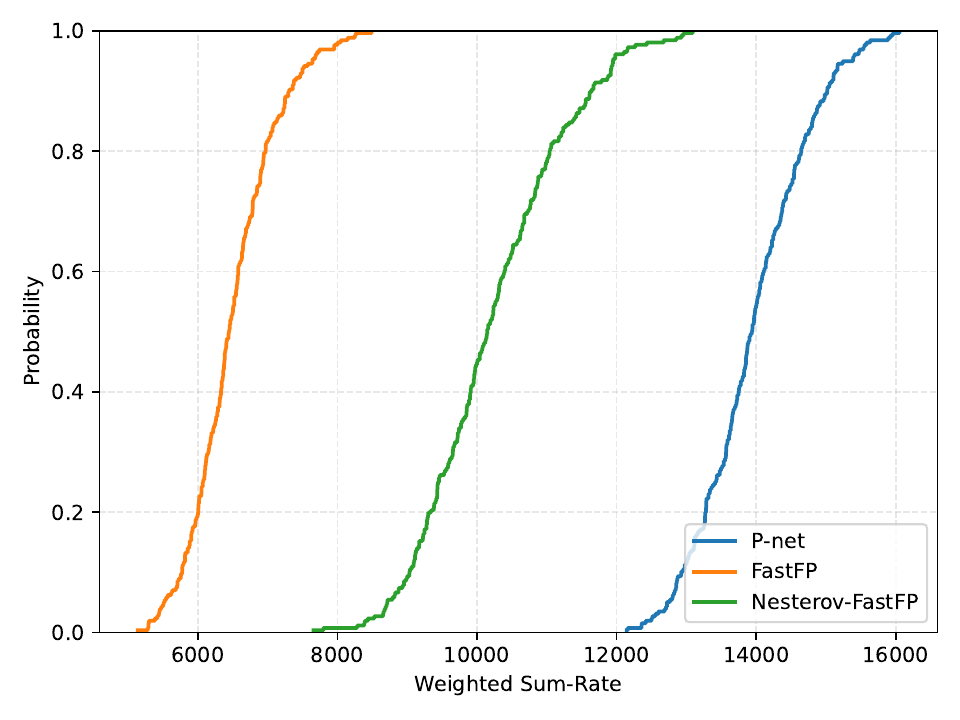}
            \label{fig:gen_rb3}
        }
        \\[+0.5em]
        \subfloat[$U=35$]{
            \includegraphics[width=0.31\textwidth]{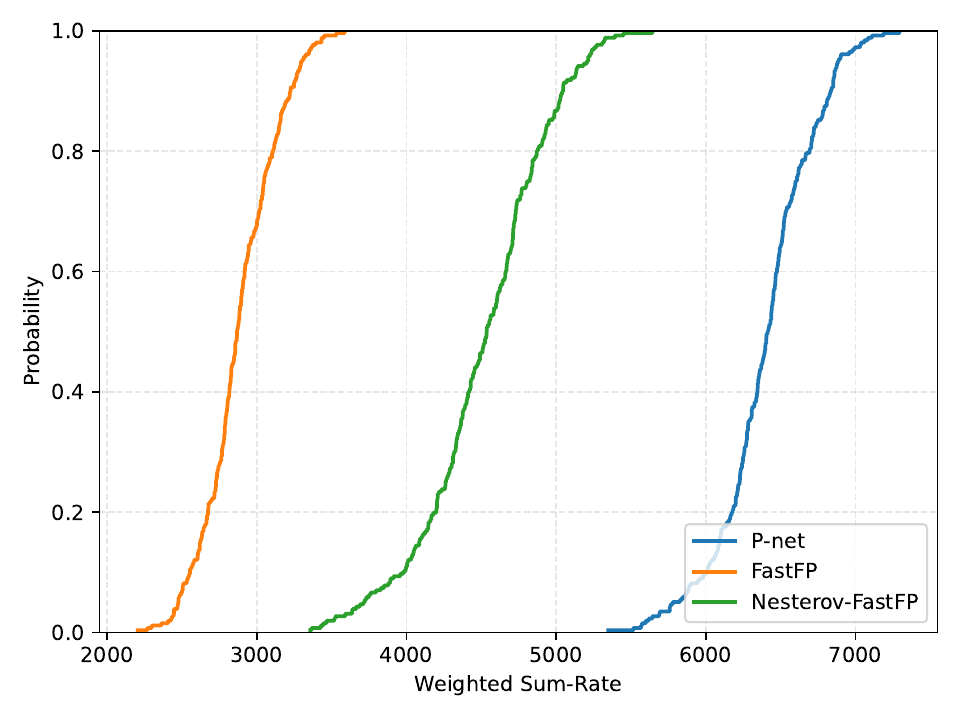}
            \label{fig:gen_ue1}
        }
        &
        \subfloat[$U=42$]{
            \includegraphics[width=0.31\textwidth]{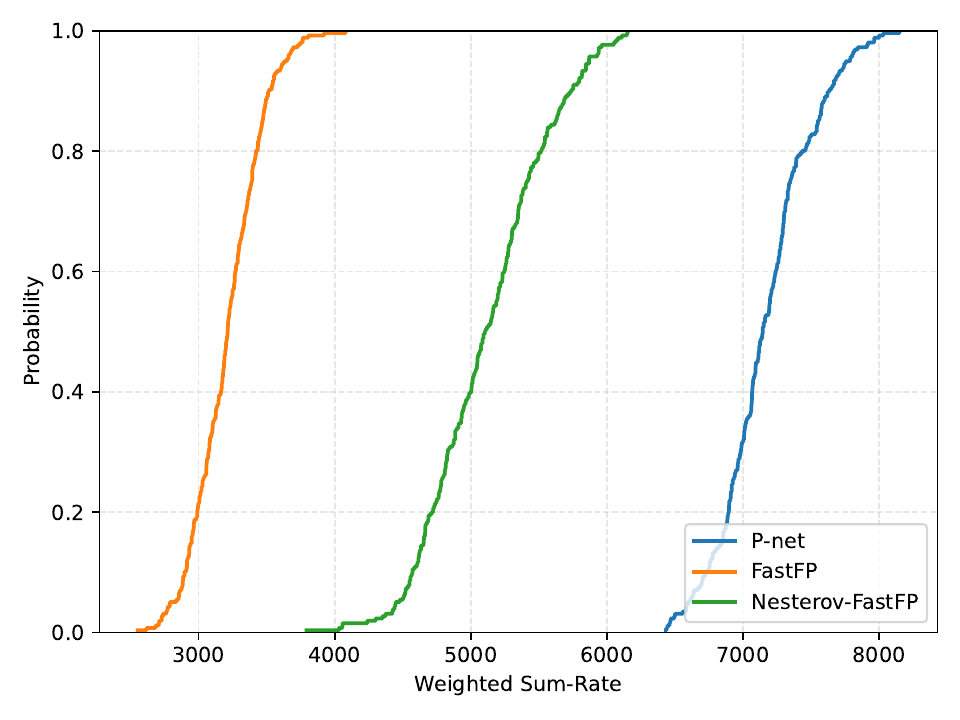}
            \label{fig:gen_ue2}
        }
        &
        \subfloat[$U=49$]{
            \includegraphics[width=0.31\textwidth]{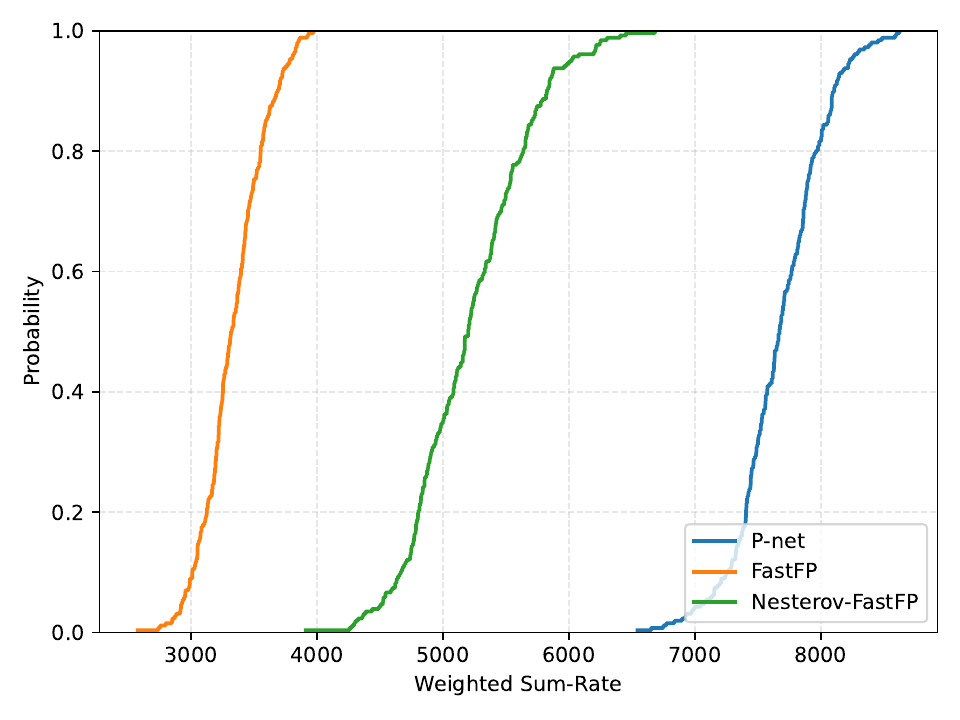}
            \label{fig:gen_ue3}
        }
    \end{tabular}

    \caption{WSR CDFs after 100 iterations under network-scale variations. The rows vary the numbers of RBGs and users, respectively.}
\end{figure*}

Table~\ref{tab:pnet_gen_shadow_power} evaluates the same P-Net model without retraining under power and shadowing shifts. P-Net outperforms Nesterov-FastFP in all nine cases, with gains increasing from $7.65\%$--$20.25\%$ at $20$ dBm to $26.84\%$--$47.55\%$ at $40$ dBm.

\begin{table}[!t]
\caption{Average WSR After 100 Iterations Under Transmit Power and Shadowing Shifts}
\label{tab:pnet_gen_shadow_power}
\centering
\footnotesize
\setlength{\tabcolsep}{3pt}
\renewcommand{\arraystretch}{1.08}
\begin{tabular}{@{}ccccc@{}}
\toprule
$P_{\mathrm{max}}$ (dBm) & $\sigma_{\mathrm{sh}}$ (dB) & P-Net & Nesterov-FastFP & Relative Gain \\
\midrule
\multirow{3}{*}{$20$}
& $4$  & $3899.58$ & $3622.47$ & $7.65\%$ \\
& $8$  & $3883.08$ & $3459.46$ & $12.25\%$ \\
& $12$ & $3848.02$ & $3200.07$ & $20.25\%$ \\
\midrule
\multirow{3}{*}{$30$}
& $4$  & $5239.81$ & $4421.75$ & $18.50\%$ \\
& $8$  & $5074.52$ & $4024.77$ & $26.08\%$ \\
& $12$ & $4828.85$ & $3531.88$ & $36.72\%$ \\
\midrule
\multirow{3}{*}{$40$}
& $4$  & $5883.89$ & $4638.82$ & $26.84\%$ \\
& $8$  & $5635.82$ & $4142.02$ & $36.06\%$ \\
& $12$ & $5261.12$ & $3565.62$ & $47.55\%$ \\
\bottomrule
\end{tabular}
\end{table}

\subsection{Performance Evaluation Under Joint Optimization of RBG Allocation and Beamforming}
\label{subsec:joint_vk_experiment}

The preceding experiments under fixed RBG assignments verify that P-Net can accelerate the continuous FastFP beamforming update. We next evaluate the complete alternating framework, where P-Net updates the beamformers and K-Net updates the RBG assignments, under $(B,U,K)=(7,14,28)$ and $(7,28,35)$. All methods use identical channel realizations and initializations. We compare FastFP + Hungarian, FastFP + Greedy, FastFP + K-Net, and P-Net + K-Net. The first three isolate the benefit of K-Net's learned decoding order over fixed-order greedy allocation, while the last comparison evaluates the additional acceleration provided by P-Net.

\begin{table*}[!t]
\caption{Performance and Runtime Summary for Joint RBG Allocation and Beamforming}
\label{tab:joint_vk_perf_two_scenarios}
\centering
\renewcommand{\arraystretch}{1.12}
\setlength{\tabcolsep}{3.2pt}
\resizebox{\textwidth}{!}{
\begin{tabular}{llcccccc}
\toprule
Scenario
& Method
& Final WSR
& WSR Gain
& CPU time (s)
& Time Ratio
& Speedup
& Target Iter./Time \\
\midrule
\multirow{4}{*}{$7$BS-$14$UE-$28$RB}
& FastFP + Hungarian
& $2730.70$
& Ref.
& $4.60$
& $100.00\%$
& $1.00\times$
& $100 / 4.60~(100.00\%)$ \\
& FastFP + Greedy
& $2603.77$
& $-4.65\%$
& $1.47$
& $31.86\%$
& $3.14\times$
& -- \\
& FastFP + K-Net
& $2874.88$
& $+5.28\%$
& $1.69$
& $36.74\%$
& $2.72\times$
& $67 / 1.13~(24.62\%)$ \\
& P-Net + K-Net
& $2929.13$
& $+7.27\%$
& $1.69$
& $36.72\%$
& $2.72\times$
& $53 / 0.89~(19.46\%)$ \\
\midrule
\multirow{4}{*}{$7$BS-$28$UE-$35$RB}
& FastFP + Hungarian
& $4670.34$
& Ref.
& $8.91$
& $100.00\%$
& $1.00\times$
& $100 / 8.91~(100.00\%)$ \\
& FastFP + Greedy
& $4516.82$
& $-3.29\%$
& $4.32$
& $48.46\%$
& $2.06\times$
& -- \\
& FastFP + K-Net
& $4722.97$
& $+1.13\%$
& $4.53$
& $50.83\%$
& $1.97\times$
& $93 / 4.21~(47.27\%)$ \\
& P-Net + K-Net
& $4936.90$
& $+5.71\%$
& $4.55$
& $51.02\%$
& $1.96\times$
& $67 / 3.05~(34.18\%)$ \\
\bottomrule
\end{tabular}}
\end{table*}

Table~\ref{tab:joint_vk_perf_two_scenarios} reports the final WSR and measured runtime per sample of the four joint optimization methods. In Scenario I, FastFP + Greedy reduces the runtime to $31.86\%$ of FastFP + Hungarian, but its final WSR decreases by $4.65\%$. This confirms that simply replacing Hungarian matching with a fixed-order greedy decoder provides acceleration at a clear performance cost. In contrast, FastFP + K-Net uses the same greedy decoder but replaces the default order with a learned priority order. It improves the final WSR by $5.28\%$ over FastFP + Hungarian while using only $36.74\%$ of the baseline runtime. Therefore, the gain of K-Net comes from learning a better decoding order rather than from changing the underlying FP-derived edge utility or invoking a more expensive assignment solver.

The same conclusion also holds in Scenario II. When the number of users increases from $14$ to $28$, FastFP + Greedy still remains below FastFP + Hungarian in final WSR. FastFP + K-Net, however, slightly exceeds the Hungarian baseline by $1.13\%$ while reducing the runtime to $50.83\%$. This result is important because the larger scenario increases the size and difficulty of the discrete scheduling problem, yet K-Net still preserves the WSR quality of Hungarian matching with much lower inference time.

The complete P-Net + K-Net method further strengthens this advantage in both scenarios. In Scenario I, it improves the WSR gain from $5.28\%$ to $7.27\%$ over FastFP + Hungarian and reaches the final Hungarian target using only $19.46\%$ of the baseline time. In Scenario II, the additional contribution of P-Net becomes more pronounced: P-Net + K-Net improves the final WSR by $5.71\%$ over FastFP + Hungarian and by $4.53\%$ over FastFP + K-Net. These results show that K-Net addresses the discrete-matching bottleneck, while P-Net further accelerates and improves the continuous beamforming trajectory after the allocation step has been changed.

\begin{figure}[!t]
    \centering
    \includegraphics[width=\columnwidth]{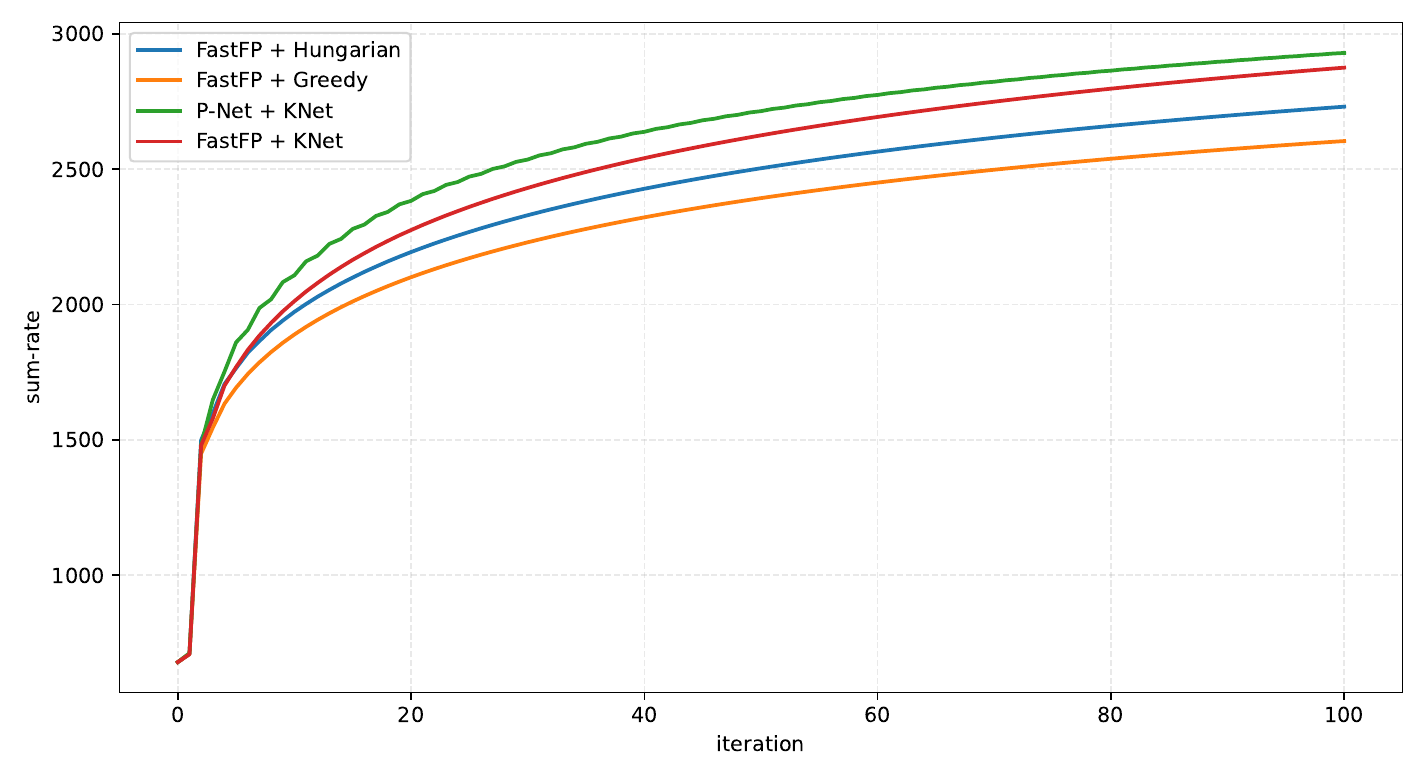}
    \caption{Average WSR trajectory of joint RBG allocation and beamforming optimization in Scenario I ($B=7$, $U=14$, and $K=28$).}
    \label{fig:joint_vk_trajectory_14ue28rb}
\end{figure}

\begin{figure}[!t]
    \centering
    \includegraphics[width=\columnwidth]{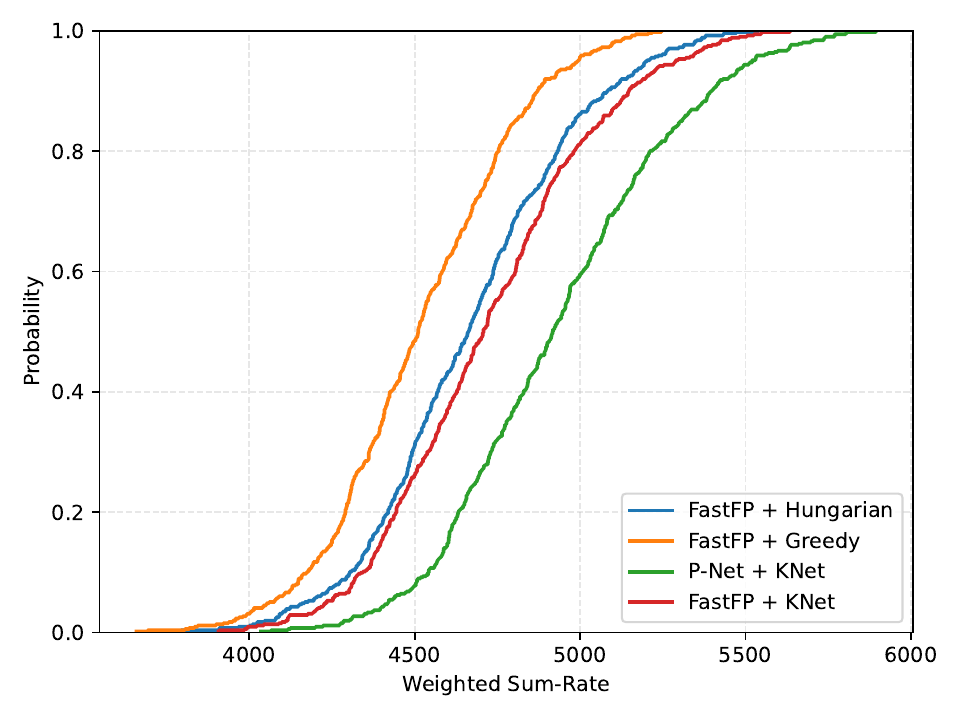}
    \caption{Empirical CDF of the final WSR for joint RBG allocation and beamforming optimization in Scenario II ($B=7$, $U=28$, and $K=35$).}
    \label{fig:joint_vk_cdf_28ue35rb}
\end{figure}

Fig.~\ref{fig:joint_vk_trajectory_14ue28rb} confirms that the K-Net-based methods reach the Hungarian reference substantially earlier in Scenario I, with P-Net + K-Net exhibiting the fastest ascent. The mild non-monotonicity after discrete reassignment is expected because K-Net optimizes the terminal WSR rather than enforcing improvement at every intermediate iteration. Fig.~\ref{fig:joint_vk_cdf_28ue35rb} provides the complementary distributional view for Scenario II: P-Net + K-Net yields the rightmost final-WSR distribution, while FastFP + K-Net remains close to the Hungarian reference, consistent with the average values in Table~\ref{tab:joint_vk_perf_two_scenarios}.

\section{Conclusion}
\label{sec:conclusion}
In this paper, we proposed a joint deep unfolding framework for mixed-integer RBG scheduling and beamforming in multi-cell MU-MIMO networks. P-Net learns a bounded relaxation factor to accelerate FastFP while preserving monotonic ascent and stationary-point convergence, whereas K-Net learns scheduling priorities for low-complexity greedy allocation, avoiding repeated Hungarian matching. Through recurrent parameter sharing, both modules support lightweight inference beyond the training horizon. Simulations demonstrate a superior WSR--runtime trade-off and robust generalization across unseen network scales and channel conditions.

\appendices

\section{Proof of Lemma~\ref{lemma:monotonicity}}
\label{app:proof_lemma1}

\begin{proof}
For fixed $\mathbf k$, let $Q_\tau(\mathbf V)$ denote the
FastFP minorant constructed at $\mathbf V^{(\tau-1)}$. It
satisfies
$Q_\tau(\mathbf V)\leq f(\mathbf V;\mathbf k)$ for all
$\mathbf V$, with equality at $\mathbf V^{(\tau-1)}$.
Its beamformer-dependent term for slot $(u,i)$ is
$q_{u,i}(\mathbf V_{u,i})
=2\Re\{\operatorname{Tr}(\mathbf V_{u,i}^{H}
\mathbf\Xi_{u,i}^{(\tau)})\}
-\lambda_{u,i}^{(\tau)}\|\mathbf V_{u,i}\|_F^2$.

Define
$\mathbf D_{u,i}^{(\tau)}
=\mathbf V_{u,i}^{(\tau)}
-\mathbf V_{u,i}^{(\tau-1)}$ and
$\mathbf G_{u,i}^{(\tau)}
=\mathbf\Xi_{u,i}^{(\tau)}
-\lambda_{u,i}^{(\tau)}
\mathbf V_{u,i}^{(\tau-1)}$.
The projected update is
$\mathbf V_{u,i}^{(\tau)}
=\mathcal P_{\mathcal W}
(\mathbf V_{u,i}^{(\tau-1)}
+\alpha_{u,i}^{(\tau)}
\mathbf G_{u,i}^{(\tau)}/\lambda_{u,i}^{(\tau)})$.
Since $\mathcal W$ is closed and convex, the first-order
optimality condition of the Euclidean projection yields:
\[
2\Re\!\left\{
\operatorname{Tr}\!\left(
\mathbf G_{u,i}^{(\tau)H}
\mathbf D_{u,i}^{(\tau)}
\right)\right\}
\geq
\frac{2\lambda_{u,i}^{(\tau)}}
{\alpha_{u,i}^{(\tau)}}
\|\mathbf D_{u,i}^{(\tau)}\|_F^2 .
\]
Using the quadratic form of $q_{u,i}$ therefore gives
\[
q_{u,i}(\mathbf V_{u,i}^{(\tau)})
-q_{u,i}(\mathbf V_{u,i}^{(\tau-1)})
\geq
\frac{\lambda_{u,i}^{(\tau)}
(2-\alpha_{u,i}^{(\tau)})}
{\alpha_{u,i}^{(\tau)}}
\|\mathbf D_{u,i}^{(\tau)}\|_F^2
\geq 0,
\]
where the last inequality follows from
$\alpha_{u,i}^{(\tau)}\in(0,2)$.
Summing over all active slots yields
$Q_\tau(\mathbf V^{(\tau)})
\geq Q_\tau(\mathbf V^{(\tau-1)})$.
The minorization and tightness properties then imply
$f(\mathbf V^{(\tau)};\mathbf k)
\geq f(\mathbf V^{(\tau-1)};\mathbf k)$, which proves the
claim.
\end{proof}

\section{Proof of Theorem~\ref{theorem:stationary}}
\label{app:proof_theorem1}
\begin{proof}
The proof proceeds by first establishing a uniform sufficient-ascent bound, which ensures objective convergence and vanishing successive differences. We then pass to the limit along an arbitrary convergent subsequence to obtain a projected fixed-point relation, whose first-order consistency with the original WSR objective establishes stationarity.

Let $\mathbf D_{u,i}^{(\tau)}
\triangleq
\mathbf V_{u,i}^{(\tau)}
-\mathbf V_{u,i}^{(\tau-1)}$.
The sufficient-ascent bound established in the proof of Lemma~\ref{lemma:monotonicity} gives
\begin{equation}
\begin{aligned}
&f(\mathbf V^{(\tau)};\mathbf k)
-f(\mathbf V^{(\tau-1)};\mathbf k)\\
&\quad\geq
\sum_{u,i}
\frac{\lambda_{u,i}^{(\tau)}
\left(2-\alpha_{u,i}^{(\tau)}\right)}
{\alpha_{u,i}^{(\tau)}}
\left\|\mathbf D_{u,i}^{(\tau)}\right\|_F^2
\geq
c\left\|\mathbf V^{(\tau)}
-\mathbf V^{(\tau-1)}\right\|_F^2
\end{aligned}
\label{eq:sufficient_ascent}
\end{equation}
for some constant $c>0$, owing to the uniform parameter
bounds.

The feasible set is compact, and the WSR is continuous and
upper-bounded. Hence, Lemma~\ref{lemma:monotonicity} implies
that the objective sequence converges. It then follows from
\eqref{eq:sufficient_ascent} that
$\|\mathbf V^{(\tau)}-\mathbf V^{(\tau-1)}\|_F\to0$.

Let $\mathbf V^\star$ be any accumulation point and choose a
subsequence such that
$\mathbf V^{(\tau_j)}\to\mathbf V^\star$.
The vanishing successive difference further implies
$\mathbf V^{(\tau_j+1)}\to\mathbf V^\star$.
After taking a further subsequence if necessary, the bounded
stepsizes
$\eta_{u,i}^{(\tau)}
\triangleq
\alpha_{u,i}^{(\tau)}/(2\lambda_{u,i}^{(\tau)})$
converge to some $\eta_{u,i}^\star>0$.

By the continuity of the auxiliary updates and Euclidean
projection, taking the limit in
\eqref{eq:update_rule_pnet0} yields
\begin{equation}
\mathbf V_{u,i}^\star
=
\mathcal P_{\mathcal W_{u,i}}
\left(
\mathbf V_{u,i}^\star
+
\eta_{u,i}^\star
\nabla_{\mathbf V_{u,i}}
Q(\mathbf V^\star\mid\mathbf V^\star)
\right).
\label{eq:limit_fixed_point}
\end{equation}
The exact auxiliary updates ensure the first-order consistency
$\nabla_{\mathbf V}Q(\mathbf V^\star\mid\mathbf V^\star)
=
\nabla_{\mathbf V}f(\mathbf V^\star;\mathbf k)$.
Therefore, the projection optimality condition associated with
\eqref{eq:limit_fixed_point} gives
\begin{equation}
\left\langle
\nabla_{\mathbf V}f(\mathbf V^\star;\mathbf k),
\mathbf V-\mathbf V^\star
\right\rangle
\leq 0,
\qquad
\forall\,\mathbf V\in\mathcal W,
\end{equation}
which is the first-order stationary condition for the
fixed-$\mathbf k$ beamforming subproblem. Since
$\mathbf V^\star$ was arbitrary, every accumulation point is
stationary.
\end{proof}

\bibliographystyle{IEEEtran}
\bibliography{unfolding_ref}

@techreport{itur_m2160_2023,
  title = {Framework and overall objectives of the future development of IMT for 2030 and beyond},
  author = {{ITU Radiocommunication Sector}},
  institution = {International Telecommunication Union},
  type = {Recommendation},
  number = {ITU-R M.2160-0},
  year = {2023},
  month = {November},
  url = {https://itu.int}
}

@techreport{ericsson_mobility_report_2024,
  title        = {Ericsson Mobility Report June 2024},
  institution  = {Ericsson},
  author      = {{Ericsson}},
  year         = {2024},
  month        = jun,
  url          = {https://www.ericsson.com/en/reports-and-papers/mobility-report}
}

@article{tataria2021sixg,
  author  = {Harsh Tataria and Mansoor Shafi and Andreas F. Molisch and Mischa Dohler and Henrik Sj{\"o}land and Fredrik Tufvesson},
  title   = {{6G} Wireless Systems: Vision, Requirements, Challenges, Insights, and Opportunities},
  journal = {Proceedings of the IEEE},
  year    = {2021},
  volume  = {109},
  number  = {7},
  pages   = {1166--1199},
  doi     = {10.1109/JPROC.2021.3061701}
}

@article{letaief2019roadmap,
  title={The roadmap to 6G: AI empowered wireless networks},
  author={Letaief, Khaled B and Chen, Wei and Shi, Yuanming and Zhang, Jun and Zhang, Ying-Jun Angela},
  journal={IEEE communications magazine},
  volume={57},
  number={8},
  pages={84--90},
  year={2019},
  publisher={IEEE}
}

@article{gesbert2010multicell,
  author  = {David Gesbert and Stephen Hanly and Howard Huang and Shlomo Shamai and Osvaldo Simeone and Wei Yu},
  title   = {Multi-Cell {MIMO} Cooperative Networks: A New Look at Interference},
  journal = {IEEE Journal on Selected Areas in Communications},
  year    = {2010},
  volume  = {28},
  number  = {9},
  pages   = {1380--1408},
  doi     = {10.1109/JSAC.2010.101202}
}

@article{bjornson2013optimal,
  author  = {Emil Bj{\"o}rnson and Eduard Jorswieck},
  title   = {Optimal Resource Allocation in Coordinated Multi-Cell Systems},
  journal = {Foundations and Trends in Communications and Information Theory},
  year    = {2013},
  volume  = {9},
  number  = {2--3},
  pages   = {113--381},
  doi     = {10.1561/0100000069}
}

@inproceedings{yu2011multicell,
  author    = {Wei Yu and Taesoo Kwon and Changyong Shin},
  title     = {Multicell Coordination via Joint Scheduling, Beamforming, and Power Spectrum Adaptation},
  booktitle = {Proceedings IEEE INFOCOM},
  year      = {2011},
  pages     = {2570--2578},
  doi       = {10.1109/INFCOM.2011.5935083}
}

@article{hassan2014downlink,
  author  = {Naveed Ul Hassan and Mohamad Assaad},
  title   = {Downlink Beamforming and Resource Allocation in Multicell {MISO-OFDMA} Systems},
  journal = {Transactions on Emerging Telecommunications Technologies},
  year    = {2014},
  volume  = {25},
  number  = {2},
  pages   = {173--182},
  doi     = {10.1002/ett.2554}
}

@inproceedings{khanafer2012mimo,
  author    = {Ali Khanafer and Teng Joon Lim and Roya Doostnejad and Taiwen Tang},
  title     = {{MIMO-OFDMA} Rate Allocation and Beamformer Design Using a Multi-Access Channel Framework},
  booktitle = {IEEE International Conference on Communications},
  year      = {2012},
  pages     = {2553--2558},
  doi       = {10.1109/ICC.2012.6364086}
}

@article{shen2018fp2,
  author  = {Kaiming Shen and Wei Yu},
  title   = {Fractional Programming for Communication Systems---Part {II}: Uplink Scheduling via Matching},
  journal = {IEEE Transactions on Signal Processing},
  year    = {2018},
  volume  = {66},
  number  = {10},
  pages   = {2631--2644},
  url     = {https://arxiv.org/abs/1802.10197}
}

@article{christensen2008wmmse,
  title={Weighted sum-rate maximization using weighted MMSE for MIMO-BC beamforming design},
  author={Christensen, S{\o}ren Skovgaard and Agarwal, Rajiv and De Carvalho, Elisabeth and Cioffi, John M},
  journal={IEEE Transactions on Wireless Communications},
  volume={7},
  number={12},
  pages={4792--4799},
  year={2008},
  publisher={IEEE}
}

@article{shi2011wmmse,
  author  = {Qingjiang Shi and Meisam Razaviyayn and Zhi-Quan Luo and Chen He},
  title   = {An Iteratively Weighted {MMSE} Approach to Distributed Sum-Utility Maximization for a {MIMO} Interfering Broadcast Channel},
  journal = {IEEE Transactions on Signal Processing},
  year    = {2011},
  volume  = {59},
  number  = {9},
  pages   = {4331--4340},
  doi     = {10.1109/TSP.2011.2147784}
}

@article{shen2018fp1,
  author  = {Kaiming Shen and Wei Yu},
  title   = {Fractional Programming for Communication Systems---Part {I}: Power Control and Beamforming},
  journal = {IEEE Transactions on Signal Processing},
  year    = {2018},
  volume  = {66},
  number  = {10},
  pages   = {2616--2630},
  doi     = {10.1109/TSP.2018.2812733}
}

@article{zhao2023rethinking,
  author  = {Xiaotong Zhao and Siyuan Lu and Qingjiang Shi and Zhi-Quan Luo},
  title   = {Rethinking {WMMSE}: Can Its Complexity Scale Linearly With the Number of {BS} Antennas?},
  journal = {IEEE Transactions on Signal Processing},
  year    = {2023},
  volume  = {71},
  pages   = {433--446},
  doi     = {10.1109/TSP.2023.3244104}
}

@article{pellaco2022matrixfree,
  title={Matrix-inverse-free deep unfolding of the weighted MMSE beamforming algorithm},
  author={Pellaco, Lissy and Bengtsson, Mats and Jald{\'e}n, Joakim},
  journal={IEEE Open Journal of the Communications Society},
  volume={3},
  pages={65--81},
  year={2021},
  publisher={IEEE}
}

@article{shen2024accelerating,
  author  = {Kaiming Shen and Ziping Zhao and Yannan Chen and Zepeng Zhang and Hei Victor Cheng},
  title   = {Accelerating Quadratic Transform and {WMMSE}},
  journal = {IEEE Journal on Selected Areas in Communications},
  year    = {2024},
  volume  = {42},
  number  = {11},
  pages   = {3110--3124},
  doi     = {10.1109/JSAC.2024.3431523}
}

@article{sun2017majorization,
  author  = {Ying Sun and Prabhu Babu and Daniel P. Palomar},
  title   = {Majorization-Minimization Algorithms in Signal Processing, Communications, and Machine Learning},
  journal = {IEEE Transactions on Signal Processing},
  year    = {2017},
  volume  = {65},
  number  = {3},
  pages   = {794--816},
  doi     = {10.1109/TSP.2016.2601299}
}

@article{sun2018learning,
  author  = {Haoran Sun and Xiangyi Chen and Qingjiang Shi and Mingyi Hong and Xiao Fu and Nicholas D. Sidiropoulos},
  title   = {Learning to Optimize: Training Deep Neural Networks for Wireless Resource Management},
  journal = {IEEE Transactions on Signal Processing},
  year    = {2018},
  volume  = {66},
  number  = {20},
  pages   = {5438--5453},
  doi     = {10.1109/TSP.2018.2866382}
}

@article{liang2019deep,
  title={Deep-learning-based wireless resource allocation with application to vehicular networks},
  author={Liang, Le and Ye, Hao and Yu, Guanding and Li, Geoffrey Ye},
  journal={Proceedings of the IEEE},
  volume={108},
  number={2},
  pages={341--356},
  year={2019},
  publisher={IEEE}
}

@article{monga2021algorithm,
  author  = {Vishal Monga and Yuelong Li and Yonina C. Eldar},
  title   = {Algorithm Unrolling: Interpretable, Efficient Deep Learning for Signal and Image Processing},
  journal = {IEEE Signal Processing Magazine},
  year    = {2021},
  volume  = {38},
  number  = {2},
  pages   = {18--44},
  doi     = {10.1109/MSP.2020.3016905}
}

@article{hu2021iaidnn,
  author  = {Qiyu Hu and Yunlong Cai and Qingjiang Shi and Kaidi Xu and Guanding Yu and Zhi Ding},
  title   = {Iterative Algorithm Induced Deep-Unfolding Neural Networks: Precoding Design for Multiuser {MIMO} Systems},
  journal = {IEEE Transactions on Wireless Communications},
  year    = {2021},
  volume  = {20},
  number  = {2},
  pages   = {1394--1410},
  doi     = {10.1109/TWC.2020.3033334}
}

@article{chowdhury2024uwmmse,
  author  = {Arindam Chowdhury and Gunjan Verma and Ananthram Swami and Santiago Segarra},
  title   = {Deep Graph Unfolding for Beamforming in {MU-MIMO} Interference Networks},
  journal = {IEEE Transactions on Wireless Communications},
  year    = {2024},
  volume  = {23},
  number  = {5},
  pages   = {4889--4903},
  doi     = {10.1109/TWC.2023.3323207}
}

@article{zhu2026deepfp,
  title={DeepFP: Deep-Unfolded Fractional Programming for MIMO Beamforming},
  author={Zhu, Jianhang and Chang, Tsung-Hui and Xiang, Liyao and Shen, Kaiming},
  journal={IEEE Transactions on Communications},
  year={2026},
  publisher={IEEE}
}

@inproceedings{liu2019alista,
  title={ALISTA: Analytic weights are as good as learned weights in LISTA},
  author={Liu, Jialin and Chen, Xiaohan and Wang, Zhangyang and Yin, Wotao},
  booktitle={International conference on learning representations},
  year={2019}
}

@article{bertocchi2020irestnet,
  author  = {Carla Bertocchi and Emilie Chouzenoux and Marie-Caroline Corbineau and Jean-Christophe Pesquet and Marco Prato},
  title   = {Deep Unfolding of a Proximal Interior Point Method for Image Restoration},
  journal = {Inverse Problems},
  year    = {2020},
  volume  = {36},
  number  = {3},
  pages   = {034005},
  doi     = {10.1088/1361-6420/ab460a}
}

@article{mukherjee2023guarantees,
  author  = {Subhadip Mukherjee and Andreas Hauptmann and Ozan {\"O}ktem and Marcelo Pereyra and Carola-Bibiane Sch{\"o}nlieb},
  title   = {Learned Reconstruction Methods With Convergence Guarantees: A Survey of Concepts and Applications},
  journal = {IEEE Signal Processing Magazine},
  year    = {2023},
  volume  = {40},
  number  = {1},
  pages   = {164--182},
  doi     = {10.1109/MSP.2022.3207451}
}

@article{huang2009downlink,
  title={Downlink Scheduling and Resource Allocation for OFDM Systems},
  author={Huang, Jianwei and Subramanian, Vijay G. and Agrawal, Rajeev and Berry, Randall A.},
  journal={IEEE Transactions on Wireless Communications},
  volume={8},
  number={1},
  pages={288--296},
  year={2009},
  doi={10.1109/T-WC.2009.070080}
}

@article{liu2014complexity,
  title={On the Complexity of Joint Subcarrier and Power Allocation for Multi-User OFDMA Systems},
  author={Liu, Ya-Feng and Dai, Yu-Hong},
  journal={IEEE Transactions on Signal Processing},
  volume={62},
  number={3},
  pages={583--596},
  year={2014},
  doi={10.1109/TSP.2013.2293123}
}

@article{song2008weighted,
  title={Weighted Max-Min Fair Beamforming, Power Control, and Scheduling for a MISO Downlink},
  author={Song, Bongyong and Lin, Yih-Hao and Cruz, Rene L.},
  journal={IEEE Transactions on Wireless Communications},
  volume={7},
  number={2},
  pages={464--469},
  year={2008},
  doi={10.1109/TWC.2008.060878}
}

@article{khan2020fp_hungarian,
  title={Optimizing Downlink Resource Allocation in Multiuser MIMO Networks via Fractional Programming and the Hungarian Algorithm},
  author={Khan, Ahmad Ali and Adve, Raviraj S. and Yu, Wei},
  journal={IEEE Transactions on Wireless Communications},
  volume={19},
  number={8},
  pages={5162--5175},
  year={2020},
  doi={10.1109/TWC.2020.2990126}
}

@inproceedings{lu2020learning,
  title={Learning-based massive beamforming},
  author={Lu, Siyuan and Zhao, Shengiie and Shi, Qingjiang},
  booktitle={GLOBECOM 2020-2020 IEEE Global Communications Conference},
  pages={1--6},
  year={2020},
  organization={IEEE}
}

@article{shen2020lorm,
  title={LORM: Learning to Optimize for Resource Management in Wireless Networks With Few Training Samples},
  author={Shen, Yifei and Shi, Yuanming and Zhang, Jun and Letaief, Khaled B.},
  journal={IEEE Transactions on Wireless Communications},
  volume={19},
  number={1},
  pages={665--679},
  year={2020},
  doi={10.1109/TWC.2019.2948132}
}

@article{alwarafy2021drlsurvey,
  title={Deep Reinforcement Learning for Radio Resource Allocation and Management in Next Generation Heterogeneous Wireless Networks: A Survey},
  author={Alwarafy, Abdulmalik and Abdallah, Mohamed and Ciftler, Bekir Sait and Al-Fuqaha, Ala and Hamdi, Mounir},
  journal={arXiv preprint arXiv:2106.00574},
  year={2021},
  url={https://arxiv.org/abs/2106.00574}
}

@article{cui2019spatial,
  title={Spatial Deep Learning for Wireless Scheduling},
  author={Cui, Wei and Shen, Kaiming and Yu, Wei},
  journal={IEEE Journal on Selected Areas in Communications},
  volume={37},
  number={6},
  pages={1248--1261},
  year={2019},
  doi={10.1109/JSAC.2019.2904322}
}

@article{shen2021gnn_rrm,
  title={Graph Neural Networks for Scalable Radio Resource Management: Architecture Design and Theoretical Analysis},
  author={Shen, Yifei and Shi, Yuanming and Zhang, Jun and Letaief, Khaled B.},
  journal={IEEE Journal on Selected Areas in Communications},
  volume={39},
  number={1},
  pages={101--115},
  year={2021},
  doi={10.1109/JSAC.2020.3036965}
}

@article{wang2022decentralized_gnn,
  title={Learning Decentralized Wireless Resource Allocations With Graph Neural Networks},
  author={Wang, Zhiyang and Eisen, Mark and Ribeiro, Alejandro},
  journal={IEEE Transactions on Signal Processing},
  volume={70},
  pages={1850--1863},
  year={2022},
  doi={10.1109/TSP.2022.3157316}
}

@article{sutton1999policy,
  title={Policy gradient methods for reinforcement learning with function approximation},
  author={Sutton, Richard S and McAllester, David and Singh, Satinder and Mansour, Yishay},
  journal={Advances in neural information processing systems},
  volume={12},
  year={1999}
}

\end{document}